\documentclass[aps,twocolumn,nofootinbib]{revtex4-2}
\usepackage{mathtools}
\usepackage{multirow}
\usepackage{booktabs}
\usepackage{adjustbox} 
\usepackage{amssymb}
\usepackage{tikz}
\usepackage{amsthm}
\usepackage{amsmath}
\usepackage{array}
\usepackage{dcolumn}
\usepackage{bm}
\usepackage{graphicx}    
\usepackage{subcaption}  
\usepackage{enumitem}   
\usepackage{float}
\usepackage{makecell}  
\usepackage{braket} 
\usepackage{bm}
\usepackage{titlesec}
\titlespacing*{\subsubsection}{0pt}{0.6\baselineskip}{0.3\baselineskip}
\usepackage{titlesec}
\titlespacing*{\subsection}
  {0pt}   
  {0.8em} 
  {0.3em} 
\usepackage{hyperref}
\usepackage{url}
\usepackage{wrapfig}
\hypersetup{
    colorlinks,
    citecolor=blue,
    filecolor=blue,
    linkcolor=blue,
    urlcolor=blue,
}
\usepackage{ragged2e}
\renewcommand{\eqref}[1]{Eq.~({\ref{#1}})}

\newtheorem{theorem}{Theorem}
\newtheorem{proposition}[theorem]{Proposition}
\DeclareMathOperator{\tr}{Tr}
\DeclareMathOperator{\imag}{Im}
\DeclareMathOperator{\diag}{diag}

\begin{document}

\author{Kaitlin Gili$^{1}$}
\email[Corresponding author: ]{kaitlinmgili@gmail.com}
\author{Zachary P. Bradshaw$^{1}$}
\affiliation{$^1$QodeX Quantum, Bethesda, MD, USA}

\date{\today}

\begin{abstract}
The field of quantum machine learning (QML) evolved to value models believed to most directly rival those providing utility in classical ML, namely large-scale neural networks. This pursuit of quantum neural-network analogues has produced important theoretical insights for expressivity and trainability at scale; however, due to our inability to conduct large-scale empirical studies, this direction has contributed to skepticism that QML will provide a compelling application for quantum computers. In the meantime, classical ML has been learning a hard lesson with respect to deploying un-interpretable neural networks in the wild: model interpretability matters for domain-adapted co-design and human adoption. We adopt this larger ML perspective to argue that quantum ML model value can be found through the characterization of its inherent interpretability offerings -- i.e.~its mathematical structure that contributes meaningfully to desired model behavior for the specific ML task. To support our perspective, we provide a motivating example of a characterization with quantum Fourier models and random Fourier features (RFF) as approaches to approximate Gaussian process (GP) kernels for uncertainty quantification tasks in ML. The top-down and bottom-up complementarity of the two mathematical constructions reveals that quantum Fourier models offer \emph{different} tools than RFFs for principled GP kernel design and interpretable discovery for uncertainty quantification with real-world data. To showcase the rich variety of inductive biases enabled by quantum information tools, we review examples from the QML literature -- including symmetry, metric geometry, and topology -- that can be used to design inherently interpretable ML models for specific tasks.
We hope this framing encourages the QML community to value the inherent components and mechanisms of quantum models separately from task performance, as inherent interpretability might be the reason that a quantum model, and potentially a quantum computer, gets used in practice for ML. 
\end{abstract}

\title{Inherent interpretability provides inherent value in quantum machine learning}

\maketitle

\section{Introduction}

In the late 2010s, the field of quantum machine learning (QML) expanded in community and resources with a goal of delivering \emph{reasons} for the use of a near-term quantum computer \cite{Preskill2018, Bharti2022_nisqreview}. Given the rise of parameterized quantum circuits for optimization in the NISQ era \cite{Cerezo2021_vqareview, Moll2017, Wecker_PRA2015, lubasch2020variational}, it became natural to see these circuits as highly-flexible, trainable machine learning (ML) models that could potentially rival \emph{highly valued} classical neural networks for regression, classification, and generative tasks \cite{Farhi2018,cong2019quantumconvolutional, bausch2020recurrent, Benedetti2021, NonlinearQCBMs, chen2022quantumlong,  cherrat2024quantumvision}. What followed was a large effort to study the ML model quality of these circuits -- named quantum neural networks (QNNs) -- with different information-theoretic tools and generalization performance benchmarks \cite{abbas2021power,Gili2022,hur2022quantumconvolutional,khoo2024benchmarking, bowles2024better, basilewitsch2025quantum, qmetric2025benchmarking}. Through trainability investigations, we learned that uniformly random parameter initializations of circuits with high-order qubit interaction lead to a cost signal that requires a number of measurements that scales exponentially in the number of qubits to resolve \cite{schuld2019evaluating,Mcclear2018Barren, holmes2021scramblers, wang2021noiseinduced, holmes2022expressivity, cerezo2021costfunction, arrasmith2021gorges, larocca2025review}. In addition, we have learned that there are initialization methods that add structure to the problem, which in theory, should help with training large models \cite{shi2025avoiding}. Importantly, we learned that small-scale benchmarking studies do not provide much evidence that QNNs are going to be useful at scale, or offer knowledge of why and when they should be used \cite{bowles2024better}. Furthermore, the quantum community has adopted considerable skepticism that QNNs will provide us with a reason to use a quantum computer: small scale $\rightarrow$ classically simulatable, large-scale $\rightarrow$ we won't know until we can try. 

In the meantime, the classical ML community has not concerned itself with rivaling quantum, but instead with advancing, adapting, and deploying deep learning models in the wild \cite{Bengio-Book, bengio2021deep, paleyes2022challenges}. In doing so, the classical community has run into a whole new set of problems that come from developing a tool that is intended to enhance \emph{human} well-being and productivity. For example, in high-risk contexts, we want the model to learn to abstain from making a prediction if it doesn't have enough information in its training data to do so \cite{hendrickx2024machine, hasan2023survey}. This is especially relevant in medical prognosis, where we largely lack training data for certain demographics (e.g.~gender, race), and yet, we want to build models that benefit everyone \cite{norori2021addressing, turner2022race}. Alongside abstention, there has been a large push to understand \emph{how} the model is making its predictions in order to establish trust with users. This has led to the rapidly expanding field of mechanistic interpretability, where research focuses on building post-hoc explanations of large-scale neural networks \cite{bereska2024mechanistic,somvanshi2026bridging}. 

More recently, there has been a call to shift away from developing post-hoc explanations of uninterpretable models and to move towards \emph{inherently} interpretable ones \cite{Zschech2025Inherently}. An inherently interpretable ML model is ``constrained in model form so that it is either useful to someone, or obeys structural knowledge of the domain" \cite{Rudin2019StopExplaining}. Put another way, domain-aligned mathematical mechanisms are explicitly encoded to constrain and guide the model toward the intended behavior. This approach benefits both model research designers and users. Research designers can sit at the intersection of theory and experiment -- principled mathematical theory informs model building, and experiments can be used to validate the theory on real-world data, and iterate for further improvement. Users can place greater confidence in a tool that is built through \emph{co-design}, where their domain expertise and task-specific needs can be incorporated. Importantly, the mechanistic account of the tool’s behavior enables more informed reasons for use. 

We adopt this greater ML perspective, and argue that a quantum ML model can be \emph{valued} through the lens of its inherent interpretability offerings -- i.e.~its mathematical structure that contributes meaningfully to desired model behavior for the specific ML task. In this frame of thinking, one is less concerned with the development of large-scale QNNs to rival classical ones in generalization performance. Rather, one prioritizes the question: \emph{what mathematical components and mechanisms from quantum information does the model offer for understanding, design, and discovery in a specific ML setting?} Then, provided the interpretability benefit, we require detailed resource estimation for the necessary compute to implement the model for theory validation and design iteration -- which could very well warrant a quantum computer. 

To support our perspective, we provide a motivating example of what it means to characterize a quantum model for its interpretability offerings in an ML setting. We use quantum Fourier models \cite{schuld2021effect} and random Fourier features (RFF) \cite{rahimi2007random} as approaches to approximate Gaussian process (GP) kernels for uncertainty quantification tasks in ML \cite{rasmussen2006Gaussian, li2025Gaussian}. The top-down and bottom-up complementarity of the two mathematical constructions reveals that quantum Fourier models offer \emph{different} tools than RFFs for interpretable GP kernel design and discovery for uncertainty quantification with real-world data. We comment on the benefits of using a classical computer for model implementation -- despite requiring quantum computation for non-classically simulatable Hamiltonian families. 
To showcase the rich variety of inductive biases enabled by quantum information tools, we review examples from the QML literature -- including symmetry, metric geometry, and topology -- that can be used to design inherently interpretable ML models for specific tasks.
 While we limit our attention to these concrete examples of mathematical structure, we note that a more general category theoretic approach was established by Parzygnat et al.~\cite{parzygnat2025toward}.

Our framing offers a counter-argument to the approach of dequantization \cite{tang2022dequantizing}, which devalues a quantum model for an ML task if its performance can be at least approximately achieved by a different architecture that can be computed classically efficiently. Dequantization values models strictly based on performance, rather than the benefits stemming from the methodology. Our framing is complementary to that of Belis et al. \cite{belis2026spectral}, which argues that spectral methods can be used to construct a simplicity bias for ML regularization, and are potentially natural for quantum computers. For example, a Fourier transform over the boolean cube can be used to map a generative data distribution into Fourier space so that its function smoothness can be well-designed. We add that this inherent interpretability through Fourier space is also what makes the model valuable in a real-world ML setting, and would provide a reason to use this model over another that achieves the same smoothness without the interpretability. To summarize Belis et al. \cite{belis2026spectral} asks: “how can we use the strengths of a quantum computer to build good models” -- whereas we argue that a good model \emph{is} inherently interpretable and to implement these models, we will need to estimate the compute resources required. Given that we use an extensive quantum theory toolkit, it is reasonable to consider quantum computers as an implementation resource. 

\section{Motivating example: Fourier models for uncertainty quantification}

We provide a motivating example of what it means to characterize a quantum model for its interpretability offerings in an ML setting. Our ML setting consists of a supervised learning regression problem with a goal of achieving meaningful uncertainty quantification over predictions on unseen data. These estimates should communicate the model's confidence in its own predictions, such that human decision-makers know when they should not trust the outputs. Gaussian process (GP) models have long been the gold-standard for these estimates, as they incorporate relationships between the input data in the uncertainty estimates of the output data predictions \cite{rasmussen2006Gaussian}. Hence, the model directly encodes the bias that new inputs that are dissimilar to the training inputs will have high uncertainty over their predictions. More generally, Bayesian methods aim to incorporate inductive biases into \emph{probabilistic} ML models directly through domain-informed priors \cite{sam2024bayesian}, variational assumptions for posterior distributions \cite{harvey2026occams}, and target kernels for computing GP covariances \cite{wilson2013Gaussian, tompkins2018fourier}. For a soft introduction into Bayesian methods in ML, see Appendix \ref{soft_intro_Bayesian}. 

A GP is defined as a Gaussian distribution over functions, $f(x) \sim \mathcal GP(\mu, k(x_i, x_i'))$, where $\mu$ is a vector of mean function evaluations and the covariance matrix is determined by pairwise input similarities encoded by the kernel function $k(x_i,x_i')$. Computing a GP directly requires choosing a kernel function. Alternatively, a GP can be induced via a Bayesian linear model with a deterministic feature map \cite{rasmussen2006Gaussian}. Prior to characterizing different Fourier feature maps, we describe the standard Bayesian linear regression set-up that enables one to view how the components of these feature maps can influence GP covariance design. 

\subsection{Bayesian linear regression as a weight-space lens on GPs}\label{Bayesian_GPs}

Let us assume that we have access to a regression training dataset $\mathcal{D} = \{x_i, y_i\}_{i=1}^N$, where $x_i \in \mathbb{R}^d$ and $y_i \in \mathbb{R}$. We transform the input data via a chosen non-linear feature map $x_i \mapsto \phi(x_i) \in \mathbb{R}^R$. The relationship in the data is modeled via an inner product between the features and an $R-$dimensional weight vector $c$, represented as: 
\begin{equation}\label{base_model}
    f_{c}(x_i)  = c^T\phi(x_i) = \sum_{r=1}^R c_r \phi_r(x_i)
\end{equation}
In a typical supervised linear regression setting, we train the parameters $c$ to minimize a loss function of the form $\mathcal{L}(c) = \sum_{i=1}^N\ell(y_i, f_c(x_i))$, where $\ell$ is typically the mean squared error (MSE) function.

In the Bayesian approach to this problem, we keep the same base model $f_{c}(x_i)$, but introduce a probabilistic framework to maximize the likelihood of the model generating each true data value $y_i$, under some uncertainty over the model weights. We represent our Bayesian framework by defining the following distributions: 
\begin{equation}
\begin{array}{@{}l@{\;}l@{}}
\text{Prior:}
\ p(c) =  \mathcal{N}(c| 0_R,I_{R \times R}), \\[0.35em]
\text{Joint likelihood:} \ p(y|c) = 
 \prod_{i=1}^N\mathcal{N}(y_i| c^T\phi(x_i), \sigma_\text{noise}^2) \\[0.35em]
\end{array}
\label{distributions}
\end{equation}
\normalsize
We define the prior over parameters $c$ to be multivariate Gaussian with an $R-$dimensional zero mean vector and an $R \times R$ identity covariance matrix, as indicated by each subscript. The identity covariance matrix incorporates the assumption that the parameters of our model $c_r$ are independent. The joint likelihood is constructed as a product of Gaussian distributions for each data input, where the mean is the scalar model output $f_c(x_i)$ with a shared scalar hyper-parameter variance $\sigma^2_{\text{noise}}$. Thus, we bake in the assumption that $y_i = f_c(x_i) + \epsilon_i$, where the scalar $\epsilon_i \sim \mathcal{N}(0, \sigma_\text{noise}^2)$. One can show that this assumption exists inherently when minimizing the MSE loss ($\epsilon_i$ is the minimized error) in regression problems, and that maximizing the $\log$ Gaussian likelihood function is equivalent to minimizing MSE up to an additive constant. Note that we condition on the input data $x$, rather than treat it as a random variable, so we do not include it in our probabilistic model definitions. 

According to Bayes rule, the product of the joint likelihood and the prior are proportional to the Gaussian posterior. When both distributions are Gaussian, the posterior mean $\mu_{\text{post}}$ and covariance $\Sigma_{\text{post}}$ have the following closed form: 
\begin{equation}
\begin{array}{@{}l@{\;}l@{}}
\text{Posterior $p(c|y) = \mathcal{N}(c|\mu_{\text{post}},\Sigma_{\text{post}})$:} \\
\ \text{$\mu_{\text{post}}$} =  \frac{1}{\sigma_{\text{noise}}^2}\Sigma_{\text{post}}\phi(x)^T y, \\[0.35em]
\text{$\Sigma_{\text{post}}$} = (I + \frac{1}{\sigma_{\text{noise}}^2}\phi(x)^T\phi(x))^{-1}  \\[0.35em]
\end{array}
\label{posterior}
\end{equation}
The posterior distribution captures uncertainty in the model parameters $c$ after observing all training inputs $x$ and outputs $y$. Here, $\phi(x)$ is a $N \times R$ matrix, containing the feature vectors of the entire dataset. To predict $f_c(x_i^{\text{test}})$ on a new single test point $x_i^{\text{test}}$, we transform the data into $\phi(x_i^{\text{test}})$ and compute the closed form Gaussian mean $\mu_{\text{pred}}$ and variance $\sigma^2_{\text{pred}}$ for the posterior predictive distribution: 
\begin{equation}
\label{posterior_predictive}
\begin{aligned}
p\!\left(f_c(x_i^{\text{test}}) \mid y\right)
&=
\int p\!\left(f_c(x_i^{\text{test}}) \mid c\right)
p(c \mid y)\,dc \\
&=
\mathcal{N}\!\left(
f_c(x_i^{\text{test}}) \mid \mu_{\mathrm{pred}}, \sigma^2_{\mathrm{pred}} 
\right),
\\[0.5em]
\mu_{\mathrm{pred}}
&=
\phi(x_i^{\text{test}})^T \mu_{\mathrm{post}},
\\[0.35em]
\sigma^2_{\mathrm{pred}}
&=
\phi(x_i^{\text{test}})^T
\Sigma_{\mathrm{post}}
\phi(x_i^{\text{test}})
\end{aligned}
\normalsize
\end{equation}

The variance of the model prediction for the new test point depends on its feature embedding directional overlap with that of the training data. The uncertainty decreases in the directions defined by the feature vectors of the training data. If the test embedding is equal to a training embedding, the uncertainty over the new test point is low. If the test embedding is completely orthogonal to all training embeddings, the uncertainty over the new test point is $\phi(x_i^{\text{test}})^T\phi(x_i^{\text{test}})$. Thus, the uncertainty is calibrated based on how well the feature directions of the test data overlap with that of the training data. 

So far, we have a described a typical Bayesian linear model for a supervised regression problem \cite{bishop2006pattern}. With little effort, we can shift our \emph{interpretation} of the mathematics. Consider that our independent weights $c_r$ from Eq.~\ref{base_model} are values sampled from a Gaussian. We observe that $f_c(x_i)$ is a linear sum over scalars multiplied by Gaussian distributed values. By the closure of Gaussian distributions under affine transformations, we know that the function output $f_c(x_i)$ is Gaussian distributed. More specifically, for a single datapoint $x_i$ we know by our prior distribution assumptions that $\mathbb{E}_{c_r \sim p(c_r)}[c_r] = 0$ and $\mathbb{E}_{c_r \sim p(c_r)}[c_r^2] = 1$. As a result, we obtain $\mathbb{E}_{c_r \sim p(c_r)}[\phi_r(x_i)c_r] = 0$ and $Var[\phi_r(x_i)c_r] = \phi_r(x_i)^2$. The variance comes from the notion that $\phi_r(x_i)$ is not a random variable, so it can be pulled out of the expectation to obtain $\phi_r(x_i)^2\mathbb{E}[c_r^2]$. If we use the same logic as above, over \emph{all data} $x$, such that $f_c(x) = \phi(x)c$ where $\phi(x)$ is a $N \times R$ matrix and $f_c(x)$ is output vector of dimension $N$, we observe that $f_c(x) \sim \mathcal{N}(0_N, \phi(x)\phi(x)^T)$. Each entry of the prior covariance matrix is an inner product over $R$ features, and thus a finite-dimensional kernel function $k_{\phi}(x_i, x_i')$. Thus, defining a Gaussian prior distribution over parameters $c \sim p(c)$ induces a finite GP prior over functions $f_c(x) \sim \mathcal GP(0_N, k_{\phi}(x_i, x_i'))$. 

This conceptual shift from weight-space to function-space does not change the numerical result of the posterior predictive in Eq.~\ref{posterior_predictive} \footnote{The posterior predictive can be re-written mathematically in function space, where the inverted matrix will be over a $N \times N$ kernel matrix rather than the $R \times R$ feature matrix. The closed form mean and covariance expressions are numerically equivalent. See Appendix \ref{exact_inference} for the function-space frame. We emphasize that this is yet another scenario in which numerical equivalence, but interpretation difference offers design utility in ML.}, but rather it provides an alternative way to understand the implications of choosing the feature map $x_i \mapsto \phi(x_i)$ in this ML setting. More specifically, the feature map controls the induced GP prior covariance, encoding how training and test datapoints are related to one another into the GP uncertainty estimates. In the next section, we explore how this choice of feature map directly influences GP design. 

\subsection{Inherent interpretability characterization of Fourier models}\label{characterization}

Fourier feature maps for GPs are well studied within the Bayesian inference literature \cite{wilson2013Gaussian, lazaro-gredilla2010sparse, gal2015improving, wilson2016deepkernel,tompkins2018fourier}. In this section, we explore the inherent interpretability differences between a commonly utilized feature map in GP learning, random Fourier features (RFF)\cite{rahimi2007random}, and a quantum Fourier model \cite{schuld2021effect}. Our aim is to demonstrate how the complementarity of the two mathematical constructions reveals that quantum Fourier models offer \emph{different} mechanisms, compared to RFFs, for inherently interpretable GP design and discovery. 

\subsubsection{Random Fourier features}\label{RFF_models}

As put forth in \cite{rahimi2007random}, RFF models consist of finite feature embeddings $\phi^{\small RFF}(x_i)$ that can be written as: 
\begin{equation}
\phi^{\small RFF}(x_i) = 
\begin{bmatrix}
2\cos(x_i\omega_1^T)\\
2\sin(x_i\omega_1^T)\\
\vdots \\
2\cos(x_i\omega_{|\Omega|}^T)\\
2\sin(x_i\omega_{|\Omega|}^T)
\end{bmatrix}
\label{RFF_feature_map}
\end{equation}

The feature map transforms input data $x_i \in \mathbb{R}^d$ to a vector of dimension $R=2|\Omega|$, where $|\Omega|$ is the size of the set of frequency vectors $\{\omega_z\}_{z=1}^{|\Omega|}$. Each frequency vector $\omega_z \in \mathbb{R}^d$.

When substituted into Eq.~\ref{base_model}, each feature coefficient $c_r$ is separated into $c^{\alpha}_z$ and $c^{\beta}_z$, which are independent feature coefficients for the $\cos$ and $\sin$ terms, per frequency $z$. In accordance with our previous construction, $c^{\alpha}_z,c^{\beta}_z \sim \mathcal{N}(0,1)$. Upon substituting these components into Eq.~\ref{base_model}, we observe that the model is written as:  
\begin{equation}
f_{c}(x_i) = \sum_{z=1}^{|\Omega|} 2c_z^{\alpha} \cos(\omega_z^Tx_i) + 2c_z^{\beta}\sin{(\omega_z^Tx_i)}
\label{fourier_model}
\end{equation}
An equivalent form of this RFF model is the Fourier-type model, written as: 
\begin{equation}
f_{g}(x_i) = \sum_{z=1}^{|\Omega|} g_z e^{ix_i\omega_z^T} + g^*_z e^{-ix_i\omega_z^T}
\label{fourier_model_2}
\end{equation}
The model can be viewed as a sum of Fourier basis functions, where each is defined by a distinct frequency vector $\omega_z$. This form reveals the \emph{main} assumption baked into our RFF model: besides the case when $\omega_z, -\omega_z = 0$, we \emph{always} have frequency pairs $\omega_z, -\omega_z$ with Fourier coefficients that are complex conjugate pairs. The complex Fourier coefficients $g_z$ are related to our real coefficients as: $g_z = c^{\alpha}_{z} - ic^{\beta}_{z}$. 

Defining the kernel function $k_{\phi}(x_i,x'_i)$ via this feature map inner product results in: 
\begin{equation}
\begin{aligned}
\frac{1}{R}\phi^{\mathrm{RFF}}(x_i)\phi^{\mathrm{RFF}}(x_i')^T
&= \frac{1}{|\Omega|}\sum_{z=1}^{|\Omega|}
2\cos(\omega_z^T x_i)\cos(\omega_z^T x_i')
\\
&\quad + \frac{1}{|\Omega|}\sum_{z=1}^{|\Omega|}
2\sin(\omega_z^T x_i)\sin(\omega_z^T x_i')
\\
&= \frac{1}{|\Omega|}\sum_{z=1}^{|\Omega|}
2\cos\!\left(\omega_z^T(x_i-x_i')\right)
\end{aligned}
\label{shift_in_kernel}
\end{equation}
The kernel is normalized by the number of features. Since relative proportion information with respect to the density $p(\omega_z)$ is incorporated into the sum via the number of terms that correspond to each frequency sample $\omega_z$ and we have the normalization term $\frac{1}{|\Omega|}$, we observe that Eq.~\ref{shift_in_kernel} is an unbiased Monte-Carlo approximation of the expectation $\mathbb{E}_{\omega_z \sim p(\omega_z)}[2\cos(\omega_z^T(x_i-x_i'))]$. It converges to the expectation as $|\Omega| \rightarrow \infty$.  Importantly, when the frequency density is symmetric $p(\omega_z) = p(-\omega_z)$, we have the following equivalence : 
\begin{equation}
\begin{aligned}
k_{\text{target}}(x_i, x_i')
&= \mathbb{E}_{\omega_z \sim p(\omega_z)}
\left[
2\cos\!\left(\omega_z^T(x_i-x_i')\right)
\right]
\\
&= 2\int p(\omega_z)
e^{i\omega_z^T(x_i-x_i')}
\, d\omega_z 
\end{aligned}
\label{fourier_transform}
\end{equation}

This follows \emph{Bochner's theorem},
which states that any continuous positive-definite shift-invariant kernel function can be written as the Fourier transform of a nonnegative spectral measure. 

From an interpretability perspective, we observe that the kernel function  $k_{\phi}(x_i, x_i')$ that controls the GP prior covariance can be controlled via a chosen spectral distribution $p(\omega_z)$. In RFF, one chooses this distribution to sample $|\Omega|$ frequencies from, as an approximation to a target kernel function $k_{\text{target}}(x_i, x_i')$. It is common knowledge in Bayesian ML that choosing a distribution family with reasonable hyper-parameters induces a bias in the GP uncertainty estimates via the target kernel. MacKay et al.~\cite{mackay2003information} showed that when $p(\omega_z)$ is a Gaussian distribution, the integral becomes a Radial Basis Function (RBF) kernel. This kernel contains a bias of \emph{distance awareness} between the training and test data inputs. More specifically, the GP is biased to have higher variance for test points that are farther away in data-space from the training data.
Liu et al.~\cite{liu2023simple} proposed that this bias be used in the last layer of a neural network to obtain principled uncertainty estimates over black-box predictions. Harvey et al.~\cite{harvey2026learning} showed that when the number of features is much larger than the data size, one can use a data-emphasized variational inference (VI) method to implicitly learn the hyper-parameters of the RBF kernel. Wilson et al.~\cite{wilson2013Gaussian} explored opportunities for kernel discovery when $p(\omega_z)$ is a Gaussian mixture and the hyper-parameters are learned via Bayesian VI methods. Tompkins et al.~\cite{tompkins2018fourier} studied RFF approximations for periodic kernels, demonstrating the convergence between learning the hyper-parameters of the frequency distribution with VI and sampling frequencies from a chosen distribution with fixed hyper-parameters. In \cite{lazaro-gredilla2010sparse}, the authors explored an initialization method for the frequencies that requires sampling them from a chosen distribution that directly relates to the target kernel, and \emph{then} learning them directly as hyper-parameters with VI. Gal et al.~\cite{gal2015improving} improves upon this method by placing a variational posterior over the frequencies, so that their uncertainty is directly incorporated into the posterior. 

The above is by no means a complete list of all Fourier methods that enable GP kernel design. Rather we mention a key few to highlight that these methods typically require a choice of the distribution over frequencies. In the next section, we will show how the quantum Fourier model offers a bottom-up construction, where the frequencies emerge from \emph{Hamiltonian engineering} rather than making choices at the distribution level. 

\begin{figure}[t]
    \includegraphics[width=\linewidth]{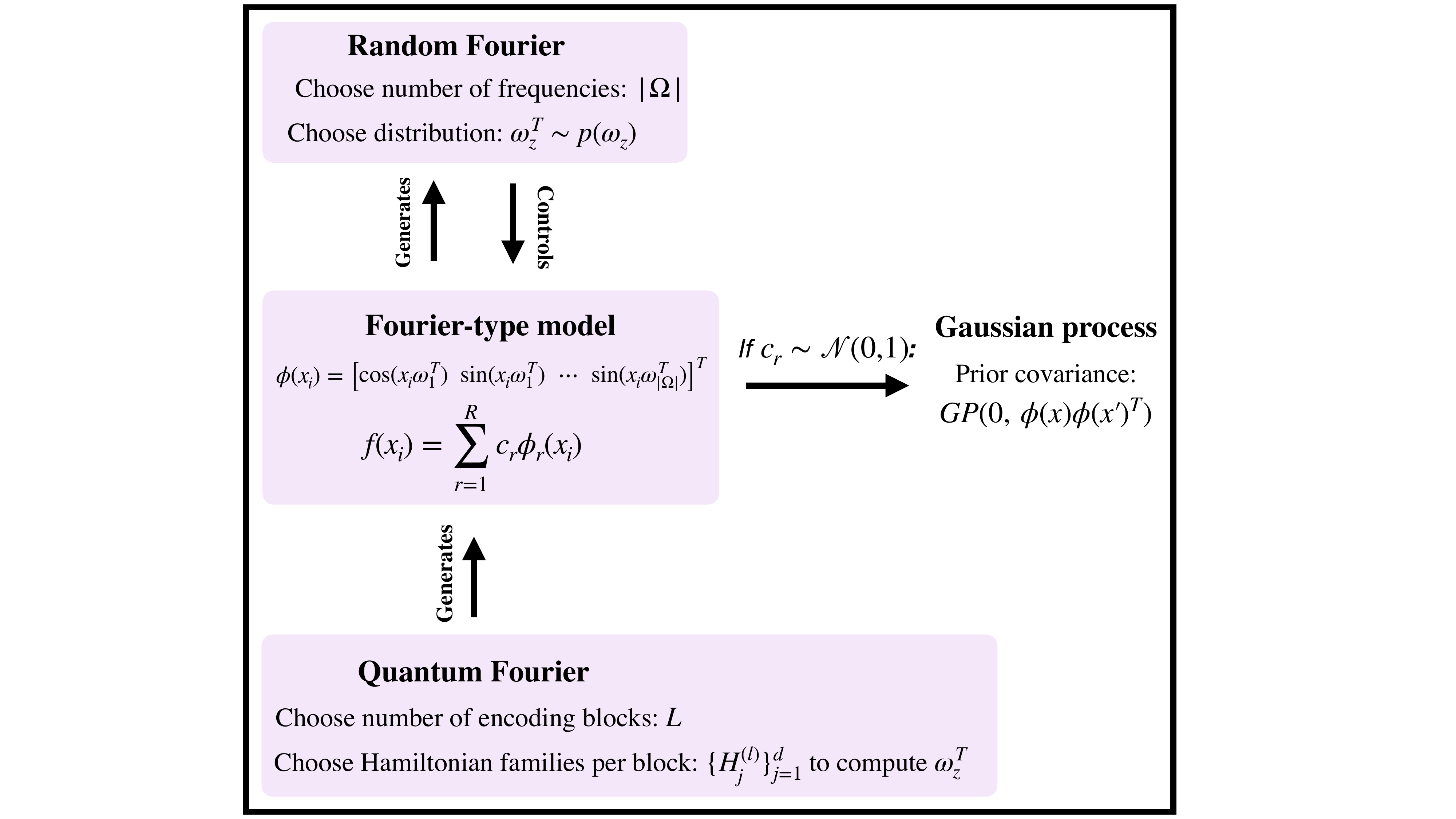}
    \caption{Visual representation of the inherent interpretability offerings of RFF and quantum Fourier models for GP model design. Hamiltonian choices offer a \emph{bottom-up} route to the kernel function design of GPs -- in contrast to the \emph{top-down} design of an explicit spectral distribution in RFF. }
    \label{figure1}
\end{figure}

\subsection{Quantum Fourier models}\label{quantum_fourier_models}

In 2020, Schuld et al.~\cite{schuld2021effect} demonstrated that quantum neural networks with a data re-uploading encoding can be rewritten as a Fourier series. This sparked further investigations into how different data encoding strategies influence the frequency spectrum for a broader class of Fourier-type models, which we refer to as quantum Fourier models \cite{jaderberg2024let,landman2022classically,shin2023exponential,peters2023generalization,
mhiri2025constrained, sweke2025potential,strobl2025fourier, sahebi2025dequantization, tuysuz2026quantum}. Different data encoding strategies consist of distinct \emph{Hamiltonian generator} choices \cite{schuld2021effect, peters2023generalization,shin2023exponential, mhiri2025constrained}, including settings in which Hamiltonian eigenvalues are learned via optimization \cite{jaderberg2024let}. It is well established that these Hamiltonian choices determine the possible frequency spectrum of the model, whereas the actual frequencies that get used are controlled by the entire quantum circuit: data encoding, trainable unitary operators, and observables. 

To characterize quantum Fourier models for their interpretability offerings in our ML setting, we first walk through the result in Schuld et al.~\cite{schuld2021effect}, which translates a quantum NN with a data re-uploading strategy from its usual Dirac notation into explicit summation and linear algebra notation. 

In Dirac notation, a QNN is typically written in the following form: 
\begin{equation}
    f(x_i) = \langle0^{\otimes n}|U^\dagger(x_i, \theta)\mathcal{O}U(x_i, \theta)|0^{\otimes n}\rangle
    \label{quantum_model}
\end{equation}
The notation $|0^{\otimes n}\rangle$ is a $2^n$ dimensional column vector representing the integer zero in binary, and $\langle 0^{\otimes n}|$ is the same vector represented as a row. We refer to the column vector from here on out as the state vector. A state vector in this quantum model can be written as any $2^n$ complex coefficients, subject to the constraint that the sum of the absolute-value squared of the coefficients sum to one.

The unitary matrix $U(x_i, \theta)$ is split into a set of purely data encoding matrices $\{G^{(1)}(x_i),  \cdots, G^{(L)}(x_i)\}$ and a set of trainable parameter matrices $\{W^{(1)}(\theta_1), \cdots, W^{(L+1)}(\theta_{L+1})\}$, where the trainable parameter and data terms alternate as: 
\begin{equation}
\begin{aligned}
    U(x_i, \theta)
    &= W^{(L+1)}(\theta_{L+1})G^{(L)}(x_i)W^{(L)}(\theta_L) \\
    &\quad \cdots G^{(1)}(x_i)W^{(1)}(\theta_1)
\end{aligned}
\label{full_unitary}
\end{equation}
Each individual unitary is a $2^n \times 2^n$-dimensional matrix of complex entries $U_{ij} \in \mathbb{C}$. The matrix holds a unitary property, such that $UU^\dagger = I$. $\mathcal{O}$ is a Hermitian matrix consisting of complex entries $\mathcal{O}_{ij} \in \mathbb{C}$. First, we focus on the data-encoding procedure, and later return to the notion of trainable parameters and the observable. 

We use a simple, classically simulatable tensor-product construction to explicitly illustrate a \emph{framework} for how Hamiltonians choices can be made to shape the frequencies induced by the data encoding. By no means does one have to use this framework to generate the Hamiltonians, nor do they have to be classically simulatable, to obtain a quantum Fourier model, where the encoding determines the accessible frequency spectrum. Furthermore, we assume each data encoding unitary $G^{(l)}(x_i)$ has the tensor product form
\begin{equation}
    G^{(l)}(x_i)=\bigotimes_{j=1}^d e^{-iH_j^{(l)}x_{ij}}
\end{equation}
with each Hamiltonian $H_j^{(l)}$ independent of the encoded data feature $x_{ij}$. The Hamiltonians are hermitian, so that the spectral theorem guarantees a diagonalization $H_j^{(l)}=V^{(l)\dagger}_j D^{(l)}_jV^{(l)}_j$, where $V^{(l)}_j$ is the matrix with rows given by the eigenvectors of $H^{(l)}_j$, and $D^{(l)}_j$ is the diagonal matrix of corresponding eigenvalues $\lambda^{(l)}_j$. It follows: 
\begin{align}
    G^{(l)}(x_i)&=\bigotimes_{j=1}^d e^{-iV^{(l)\dagger}_j D^{(l)}_jV^{(l)}_jx_{ij}}\\
    &=\bigotimes_{j=1}^d V^{(l)\dagger}_je^{-i D^{(l)}_jx_{ij}}V^{(l)}_j\\
    &=V^{(l)\dagger}\left(\bigotimes_{j=1}^d e^{-i D^{(l)}_jx_{ij}}\right)V^{(l)}
\end{align}
where $V^{(l)}=\bigotimes_jV^{(l)}_j$. Since the data only appears in the exponential, we may absorb the $V^{(l)}$ and $V^{(l)\dagger}$ into the surrounding trainable parameter blocks in Eq.~\ref{full_unitary}. Noting that the exponential of a diagonal matrix is the diagonal matrix of exponentials, we have
\begin{equation}\label{eq:diag_tensor}
    \bigotimes_{j=1}^d e^{-i D^{(l)}_jx_{ij}}=\bigotimes_{j=1}^d\diag(e^{-i\lambda^{(l)}_{j,0}x_{ij}},\ldots,e^{-i\lambda^{(l)}_{j,d_j-1}x_{ij}})
\end{equation}
where $d_j$ denotes the dimension of $H_j$. 

Since we encode a data feature $x_{ij}$ per $j$ Hamiltonian, we have the case where $d_j=2$. Then Eq.~\ref{eq:diag_tensor} shows a $2^d\times 2^d$ matrix, and each diagonal entry corresponds to a binary string $k^{(s)} \in \{0,1\}^d$ for $s \in\{1,\cdots, 2^{d}\}$. Each binary string represents each eigenvalue combination resulting from the tensor product. Since this encoding setting requires a $2^d \times 2^d$ matrix, the number of quantum bits required is $n=d$ and the size of input vector $x_i$ determines the size of the Hilbert space. Putting these pieces together, the data encoding can be written as: 
\begin{equation}
\setlength{\arraycolsep}{1pt}
\renewcommand{\arraystretch}{0.75}
\begin{aligned}
\begin{pmatrix}
e^{-i\smashoperator{\sum_{j=1}^{d}} x_{ij}\lambda_{j,k_j^{(1)}}^{(l)}} & 0 & \cdots & 0 \\
0 & e^{-i\smashoperator{\sum_{j=1}^{d}} x_{ij}\lambda_{j,k_j^{(2)}}^{(l)}} & \cdots & 0 \\
\vdots & \vdots & \ddots & \vdots \\
0 & 0 & \cdots & e^{-i\smashoperator{\sum_{j=1}^{d}} x_{ij}\lambda_{j,k_j^{(2^d)}}^{(l)}}
\end{pmatrix}
\end{aligned}\nonumber
\label{data_encoding}
\end{equation}
We observe that each data encoding term is a complex exponential function of a linear combination of the data weighted by specific eigenvalues of the Hamiltonians in $\{H_j^{(l)}\}$. Furthermore, we can write each entry as an inner product between the entire input vector $x_{i}$ and the vector of eigenvalues associated with the binary string $k^{(s)}$, which we refer to as $\lambda_{k^{(s)}}^{(l)}$. Each entry can be written as $e^{-ix_i \lambda^{(l)T}_{k^{(s)}}}$. The $d-$dimensional data is encoded into $2^d$ values with a distinct eigenvalue combination. 

Given this encoding, let us now rewrite the unitary transformation on the right side of Eq.~\ref{quantum_model} from Dirac notation into linear algebra and summation notation. Recall that the full unitary $U(x_i, \theta)$ from Eq.~\ref{full_unitary} acting on an all-zero state $|0^{\otimes n}\rangle$ can be written as (after absorbing the $V$'s and $V^\dagger$'s into the $W$'s): 
\begin{equation}
\begin{aligned}
    U(x_i, \theta)|0^{\otimes n}\rangle
    &= W^{(L+1)}(\theta_{L+1})D^{(L)}(x_i)W^{(L)}(\theta_L) \\
    &\quad \cdots D^{(1)}(x_i)W^{(1)}(\theta_1)|0^{\otimes n}\rangle 
\end{aligned}
\end{equation}
Each $D^{(l)}(x_i)$ corresponds to an independent encoding of the data vector $x_i$ with a unitary given by Eq.~\ref{eq:diag_tensor}. Each matrix acts one after another on the vector $|0^{\otimes n}\rangle$, until the final state vector is achieved. Let us consider the first block: $W^{(2)}(\theta_2)D^{(1)}(x_i)W^{(1)}(\theta_1)|0^{\otimes n}\rangle$. From here on out, we drop the $\theta_{l}$ terms inside each $W^{(l)}(\theta_{l})$ for notation simplicity. The output vector component at the index $m_{out}$ can be written as: 
\begin{equation}
\begin{aligned}
    &[W^{(2)}D^{(1)}(x_i)W^{(1)}
    |0^{\otimes n}\rangle]_{m_{\mathrm{out}}}
    \\
    &\qquad =
    \sum_{m_1 =1}^{2^d}
    W^{(2)}_{m_{\mathrm{out}}m_1}
    e^{-i\smashoperator{\sum_{j=1}^{d}} x_{ij}\lambda_{j,k_j^{(m_1)}}^{(1)}}
    W^{(1)}_{m_1 1}
\end{aligned}
\end{equation}
The subscript $m_{1}\in \{1,\cdots, 2^d\}$ represents a matrix component index for either a row or column. Notice that the first transformation takes the all-zero vector to the first column of $W^{(1)}$. Each entry in the resulting state vector is multiplied by a complex phase containing the input data. The component at index $m_{\mathrm{out}}$ is obtained by summing over the products of each current state vector entry with the row $m_{\mathrm{out}}$ from $W^{(2)}$. Extending to multiple blocks reveals nested $2^d$ summations over all subscripts $\{m_{l}\}_{l=1}^L$. Thus, we have the following form: 
\begin{equation}
\begin{aligned}
    [U(x_i, \theta)|0^{\otimes n}\rangle]_{m_{\mathrm{out}}}
    &=
    \sum_{\mathclap{\substack{
        m = (m_1,\ldots,m_L) \\
        \in \{1,\ldots,2^d\}^L
    }}}
    e^{-i\sum_{l=1}^L x_i\lambda^{(l)T}_{k^{(m_l)}}}
    \\
    &\quad \times
    W_{m_{\mathrm{out}}m_L}^{(L+1)}
    W_{m_L m_{L-1}}^{(L)}
    \cdots
    W_{m_1 1}^{(1)}
\end{aligned}
\end{equation}
We know that the output vector component for index $m_{\mathrm{out}}$ on the left side of Eq.~\ref{quantum_model} can be written as: 
\begin{equation}
\begin{aligned}
    [\langle 0^{\otimes n}|U^\dagger(x_i, \theta)]_{m_{\mathrm{out}}}
    &=
    \sum_{\mathclap{\substack{
        m^{*} = (m^{*}_1,\ldots,m^{*}_L) \\
        \in \{1,\ldots,2^d\}^L
    }}}
    e^{i\sum_{l=1}^Lx_i\lambda^{(l)T}_{k^{(m^*_l)}}}
    \\
    &\quad \times
    W^{(1)\dagger}_{1m^{*}_1 }
    \cdots
    W^{(L)\dagger}_{ m^{*}_{L-1} m^{*}_L}
    W^{(L+1)\dagger}_{m^{*}_Lm_{\mathrm{out}}}
    \end{aligned}
\end{equation}
When we combine terms and sum over all $m_{\mathrm{out}}$ indices, we obtain: 
\begin{equation}
\begin{aligned}
    f(x_i)
    &=
    \sum_{\mathclap{\substack{
        m,m^{*}= (m_1,\ldots,m_L),\,(m^{*}_1,\ldots,m^{*}_L) \\
        \in \{1,\ldots,2^d\}^L
    }}}
    e^{ix_i\sum_{l=1}^L\left(\lambda^{(l)T}_{k^{(m^*_l)}} -\lambda^{(l)T}_{k^{(m_l)}}\right)}
    \ \alpha_{m,m^{*}}
\end{aligned}\label{final_Fourier_quantum}
\end{equation}
Since the data-dependent term is independent of the summation over the $m_{\mathrm{out}}$ indices, we can separate it from the remaining factors. The trainable parameters and the observable are then collected into the coefficient $\alpha_{mm^*}$, which is defined as: 
\begin{equation}
\begin{aligned}
    \alpha_{mm^{*}}
    &=
    \sum_{m_{\mathrm{out}},m'_{\mathrm{out}}}
    W^{(1)\dagger}_{1m^{*}_1}
    \cdots
    W^{(L)\dagger}_{m^{*}_{L-1} m^{*}_L}
    W^{(L+1)\dagger}_{m^{*}_L m_{\mathrm{out}}}
    \mathcal{O}_{m_{\mathrm{out}}m'_{\mathrm{out}}}
    \\
    &\quad \times
    W_{m'_{\mathrm{out}}m_L}^{(L+1)}
    W_{m_L m_{L-1}}^{(L)}
    \cdots
    W_{m_1 1}^{(1)}
\end{aligned}
\end{equation}

As expected from the result in Schuld et al.~\cite{schuld2021effect}, the model in Eq.~\ref{final_Fourier_quantum} takes the form of a Fourier-type model. The frequencies are \emph{always} produced in pairs $\omega_z, -\omega_z$, except when $\omega_z =0$. Each frequency vector is computed as $\omega_z^T = \sum_{l=1}^L(\lambda^{(l)T}_{k^{(m^*_l)}} -\lambda^{(l)T}_{k^{(m_l)}})$. In plain language, each frequency vector in the series is defined by a sum over the differences of Hamiltonian eigenvalues, one associated with the Hamiltonian binary string $k^{(m_l^*)}$ and the other with the Hamiltonian binary string $k^{(m_l)}$, across all $L$ blocks. These binary strings indicate products of eigenvalues from the \emph{original} Hamiltonian choices $H_j^{(l)}$ per block. 

More generally than this tensor product framework, which allows us to see the relationship between the Hamiltonians choices and the frequency outcomes, the frequencies result from the eigenvalue gaps of the data encoding Hamiltonians across blocks -- no matter how the Hamiltonians were generated. When $L=1$, themaximum number of possible frequencies is $2^{2d}=4^d$, including redundant and zero frequencies. For $L>1$, we have $4^{dL}$ possible eigenvalue differences, yielding a maximum $4^{dL}$ frequencies. Thus, as $L$ increases, the number of total (non-unique) frequencies grows exponentially. 

Schuld et al.~\cite{schuld2021effect} demonstrated that with a sufficiently rich frequency spectrum (number of frequencies $\rightarrow \infty$ with expressive Fourier coefficients), a quantum Fourier model approximates any function of the encoded data. Jaderberg et al.~\cite{jaderberg2024let} showed that frequency matching is incredibly powerful for reducing the number of $L$ blocks necessary to fit data well. For example, when a shallow $L-$depth circuit contains uniformly spaced integer frequencies $\omega_z \in \mathbb{Z}$, recovering a discrete Fourier series, it can fit data well that also contains the same relative distance between its frequency modes. Otherwise, with the same gate-depth, or even many times larger, the model fails. The authors in Ref.~\cite{jaderberg2024let} demonstrated a powerful inductive bias result. We do not need more resources, we need resources that contain an aligned structure. The authors also show that when Hamiltonians contain parameterized eigenvalues, this structure can be learned. 

The Fourier coefficients $c_r = \alpha_{mm^*}$, associated with each distinct frequency, are controlled via the Hamiltonian choices, the trainable weight matrices, and the observable matrix. While in Eq.~\ref{final_Fourier_quantum}, it appears as if the term $\alpha_{mm^*}$ is produced separate from the Hamiltonian eigenvalue gaps, this is not true when gaps are \emph{redundant}. The authors in Ref.~\cite{mhiri2025constrained} showed that Hamiltonian choices can lead to multiple modes of the same frequency, providing further evidence that specific Hamiltonian choices matter rather than just increasing $L$ blindly to obtain a larger frequency spectrum. The authors further provided a theoretical argument for what happens to the Fourier coefficients of a redundant frequency spectrum when the quantum circuit is initialized in a Haar random state (i.e.~the same theoretical assumptions that underlie Barren Plateaus \cite{Mcclear2018Barren}). More specifically, they showed a correlation between frequency redundancy and the capacity for the corresponding Fourier coefficient to wiggle during training. As the number of qubits increases, the variance of a Fourier coefficient vanishes exponentially, indicating that redundant frequencies have the most expressivity in the circuit. This result indicates that alternative initialization procedures are necessary to encourage model expressivity in the Fourier coefficients. 

In order for the quantum model output to be real-valued, its Fourier coefficients must also have the same conjugation symmetry for frequency pairs $\omega_z, -\omega_z$, as in Eq.~\ref{fourier_model_2}. Since the Fourier coefficients arise due to quantum circuit choices, they do not automatically adhere to the Gaussian prior over weights assumption necessary to form a GP. Thus, we are not interested in the Fourier coefficients, but rather the data encoding strategy that produces feature embeddings for computing the kernel function $k_{\phi}(x_i, x_i')$: 
\begin{equation}
\begin{aligned}
\frac{1}{R}\phi^{QFM}(x_i)\phi^{QFM}(x_i')^T
&= \frac{1}{|\Omega|}\sum_{z=1}^{|\Omega|}2\cos\!\left(\omega_z^T(x_i-x_i')\right)
\end{aligned}
\label{shift_in_kernel_quantum}
\end{equation}

We include a normalization factor for the number of features, which is necessary for the inner product to be computed as an expectation with respect to its \emph{empirical distribution} over frequencies $\hat p(\omega_z)$, created from the frequency redundancies. When $|\Omega| \rightarrow \infty$, the target kernel function becomes: 
\begin{equation}
k_{\text{target}}(x_i,x_i') = 2\int \hat p(\omega_z) e^{i\omega_z^T(x_i-x'_{i})} d\omega_z
\end{equation}
Notice the structural equivalence between the kernel function produced by the RFF embeddings and the quantum Fourier embeddings, where the only difference lies in how the frequency distribution is produced. The quantum Fourier kernel is too shift-invariant, and controls the GP prior covariance for the uncertainty quantification task, but with \emph{different} inherent interpretability offerings than the RFF approach. Recall that the RFF embeddings require either explicit assumptions about the frequency distribution or an explicitly chosen distribution $p(\omega_z)$, whereas the quantum Fourier embeddings require choices about data-encoding Hamiltonians for each data encoding block, whose spectra \emph{generate} an empirical distribution over frequencies $\hat p(\omega_z)$. In this sense, Hamiltonian design provides a \emph{bottom-up} route to the kernel function design of GPs -- in contrast to the \emph{top-down} design of an explicit spectral distribution in RFF. 

\subsection{The value of the inherent interpretability perspective}
The bottom-up complementarity between RFF and quantum Fourier models has raised concerns that RFF models may be able to reproduce, or closely approximate, the performance of quantum ones. Thus, researchers have primarily studied these models side by side from a dequantization perspective, focusing on whether classical RFF models can match the performance of quantum Fourier models on supervised learning tasks \cite{tang2022dequantizing, sweke2025potential, sahebi2025dequantization}. For example, the authors in Ref.~\cite{sahebi2025dequantization} showed that RFF models can approximate quantum Fourier models for ridge regression tasks when the frequency distribution $p(\omega_z)$, emergent from the quantum model, contains frequency densities proportional to its Fourier coefficients and when the frequency densities are concentrated. The latter is directly related to the distribution's sampling hardness. Under this problem framing, numerical equivalence, or approximate numerical equivalence, between an RFF model and a quantum Fourier model is often treated as a \emph{negative} result. If the primary quantity of interest is predictive performance, then it is natural to believe that a classically reproducible quantum model offers little additional value. In contrast, if one values inherent interpretability, the same equivalence becomes a \emph{positive} result. The characterization of both models for the same ML task exposes a complementarity between their design spaces that can be useful for understanding and building models intentionally. For a visualization of this complementarity, see Fig.~\ref{figure1}.

In our specific ML setting, the quantum Fourier design space offers additional interpretable mechanisms for understanding known induced GP kernels. For example, certain sums of local Pauli Hamiltonians as choices for data encoding generate an empirical distribution over frequencies $\omega_z$ that is approximately Gaussian distributed \cite{peters2023generalization}. This leads to the approximation of an RBF kernel for the induced GP, which is widely used in Bayesian ML (as previously mentioned in Sec.~\ref{Bayesian_GPs}) for incorporating an inductive bias of distance-awareness into the uncertainty estimates of model predictions \cite{liu2023simple}. All shift-invariant kernels depend on relative displacement, but the RBF kernel specifically contains a bias towards \emph{smoothness} - i.e.~as datapoints move farther apart from one another, their similarity measure does not rapidly fluctuate. It's fascinating to understand that a well-studied Bayesian object is directly connected to parameterized Pauli rotations, which are central to quantum information theory and computing. 

Beyond understanding known kernels, the quantum Fourier design space offers an opportunity for inherently interpretable GP design and discovery. This aligns with the outlook presented in Ref.~\cite{oh2026fourier}, which encourages explicit spectral engineering through \emph{frequency pinching}. The authors propose choosing Hamiltonians that generate higher-frequency bands, which can help the model learn sharper features in the data. This is motivated by recent work suggesting that classical neural networks exhibit a bias toward learning low-frequency structure first, while ignoring or delaying the learning of high-frequency features. This stands in contrast to the authors of Ref.~\cite{belis2026spectral}, who argue that engineering smoothness is central to constructing effective ML models. We argue that smoothness is useful when it is desired, but should not be treated as universally optimal. Furthermore, we align with both sets of authors in the need to understand how smoothness can be interpreted and controlled.

In our ML setting, we aim to use the quantum design space to engineer frequency distributions that induce GP kernels with properties suited to a given data task. These properties may include smoothness, periodicity, or other spectral structures that have not yet been fully characterized, but that may be useful to incorporate into uncertainty estimates. Thus, adopting an inherent interpretability perspective allows us to ask:

\begin{enumerate}
    \item \emph{How do we organize Hamiltonians to view what design choices are possible?}
    \item \emph{Do these design choices lead to useful kernel behaviors for uncertainty quantification tasks?} 
    \item \emph{Are these mechanistic mathematical relationships not accessible with the inherent interpretability offerings of alternative models?}
\end{enumerate}

The first question requires one to consider ways in which data encoding Hamiltonians can be organized and selected. In the example above, each data feature is encoded via a distinct, chosen $2 \times 2$ product Hamiltonian. One simply has to choose the individual Hamiltonians, but what constitutes an \emph{interesting} choice even in this classically simulatable construction is not obvious. We are interested in proposing Hamiltonian frameworks that group Hamiltonians based on specific properties, which allow for the second question to be more meaningfully investigated. Both questions connect naturally to quantum chemistry, where Hamiltonian engineering has been widely studied \cite{mcardle2020quantum}. They also connect directly to kernel theory \cite{shawe2004kernel, hofmann2008kernel}, which has a long history in machine learning and has become increasingly relevant to QML in recent years \cite{schuld2021supervised}. Notice that there is a computational methods and resource question to be addressed when moving from inquiry one to two. Given Hamiltonians of interest, we must determine how to computationally implement and use the spectral gap kernel. We touch on this more in the section below.  
The last question posed relates to what one might call a \emph{mechanistic advantage} for a certain model -- meaning that one is able to make explicit choices with a specific model to induce desired behavior that it cannot make with another. 

We hope this motivating example more clearly demonstrates what we mean by \emph{characterizing quantum models for their inherent interpretability offerings in a specific ML setting}. If we look closely at the mathematical mechanisms that distinguish models, we can see value beyond only comparing performance gains. In this case, we have two complimentary design spaces to explore and understand how the tools offered in each can best be used. 

\subsubsection{Computational resource considerations}

RFF models were originally introduced as a more scalable construction for GPs, allowing inference to scale with the number of features rather than directly with the number of datapoints \cite{rahimi2007random}. When viewed through this lens, the quantum Fourier embeddings do not appear to offer the same computational benefit. The number of frequencies, which determines the number of features, grows exponentially with both the data dimension and the number of Hamiltonian encoding blocks. A dense $R \times R$ feature matrix inversion for the quantum Fourier embeddings is not computationally tractable for problems of large dimension. The Cholesky decomposition can be computed with one entry from the matrix at a time, but it still requires the storage of the matrix in triangular form ($2^{4dL}$ entries). However, if the matrix is sparse, or contains a lot of structure, classical computation is likely still possible with alternative inversion methods. The eigenvalue gaps and their resulting patterns determine this part of the computational feasibility. Furthermore, this assumes that we have \emph{efficient access} to the eigenvalue differences of all chosen Hamiltonians (e.g.~compact rules). 

If the size of one's dataset is reasonable (thousands rather than millions), one can compute the $N \times N$ gram matrix to be inverted instead. To use Eq.~\ref{shift_in_kernel_quantum}, one again needs efficient access to the eigenvalue differences of all chosen Hamiltonians. For many Hamiltonians, finding the eigenvalues is not a computationally easy task. Thus, it would be useful to compute the kernel in Eq.~\ref{shift_in_kernel_quantum} via a chosen Hamiltonian without ever computing the eigenvalue differences. We believe Hamiltonian simulation techniques offer a useful starting point for computing this spectral gap kernel $\sum_{i,j}e^{-i\Delta x \sum _{l=1}^L \lambda^{(l)}_i - \lambda^{(l)}_j}$. If this kernel is classically computable for chosen Hamiltonians and exhibits useful behavior for an uncertainty quantification task, then we have a clear ML win. If the kernel is hard to compute classically, is efficiently implementable on a quantum computer, and exhibits useful behavior for the ML task, then we have a clear quantum computing \emph{and} ML win.

\section{Review of inductive biases in QML}

We hope that our motivating example encourages readers to value the inherent interpretability offerings of quantum models for machine learning. 
In the subsections to follow, we review inductive biases from the QML literature that can be used to design quantum models with interpretability built in. We include this section to showcase the plethora of inductive biases offered with quantum information tools that can be used, and to encourage readers to consider how these tools might be leveraged for specific ML tasks that are essential when dealing with real-world data (e.g.~uncertainty quantification, abstention, missing data). We emphasize that not all inductive biases necessarily fall within these broad categories; rather, the categories provide a starting point for organizing many existing components and mechanisms that steer model behavior. 

\subsection{Symmetry Inductive Biases}
Most interpretability difficulties in variational QML stem from the same source, a generic parameterized circuit $U(\theta)$ realizes an essentially unstructured map from data to expectation values, leaving no principled vocabulary in which to ask what the model has learned. Geometric quantum machine learning (GQML)~\cite{meyer2022exploiting,Larocca_2022,ragone2023representation, nguyen2024theory} supplies such a vocabulary by demanding that the model respect a symmetry group $G$ acting on the data. Crucially, the same representation-theoretic structure that GQML invokes for trainability also furnishes a rigorous account of model behavior, turning symmetry from a regularizer into an interpretive frame. For the uninitiated, we provide a soft introduction to representation theory in Appendix~\ref{app:rep-theory}.

Let $G$ be a group acting on some set of data $\mathcal{X}$. This action and a choice of encoding $\mathcal{E}:\mathcal{X}\to\mathcal{H}$ together induce a unitary representation $R:G\to\mathcal{U}(\mathcal{H})$ defined by
\begin{equation}\label{eq:induced-representation}
    \mathcal{E}(g\cdot x_i)=R(g)\mathcal{E}(x_i) 
\end{equation}
so that the representation $R$ is thought to carry the action of $G$ on the level of the encoding Hilbert space. This transfer of structure allows us to understand the symmetry properties of a data point $x_i \in \mathbb{R}^d$ from a finite dataset after it has been encoded into a quantum state. Indeed, if the datapoint $x_i$ is left invariant by $g$, then by \eqref{eq:induced-representation}, we also have $R(g)\mathcal{E}(x_i)=\mathcal{E}(x_i)$. If $\mathcal{E}$ is injective, then the converse also holds and the two notions of invariance are in fact equivalent. 

In GQML, this transfer of the group action structure is used to intentionally design models with invariance built in. To illustrate this point, let us suppose our model has the typical form~\cite{Cerezo2021_vqareview}
\begin{equation}
    f(x_i)=\tr[U(\theta)\mathcal{E}(x_i)\mathcal{E}(x_i)^\dagger U^\dagger(\theta)\mathcal{O}],
\end{equation}
where $U(\theta)$ is some choice of variational ansatz and $\mathcal{O}$ is a measurement observable. The variational ansatz is called \textit{equivariant} when it commutes with the representation. That is,
\begin{equation}\label{eq:equivariant-commutator}
    [U(\theta),R(g)]=0
\end{equation}
for all possible parameters $\theta$ and all $g\in G$. We assume $U(\theta)$ is a product of unitary transformations $U_k(\theta_k)$, so that each such transformation can be written $U_k(\theta_k)=e^{-i\theta_k H_k}$ for some Hermitian operator $H_k$ by Stone's theorem~\cite{stone1932one}. By Taylor expanding, it follows that a sufficient condition for \eqref{eq:equivariant-commutator} is
\begin{equation}
    [H_k,R(g)]=0
\end{equation}
for all $g\in G$, independent of $\theta$. If in addition to an equivariant ansatz, one has an equivariant observable ($[\mathcal{O},R(g)]=0$ for all $g\in G$), then the model output is invariant under the action of the symmetry group. Indeed, by the cyclicity of the trace and the equivariance conditions, we have
\begin{align}
    f(g\cdot x_i)&=\tr[U(\theta)\mathcal{E}(g\cdot x_i)\mathcal{E}(g\cdot x_i)^\dagger U^\dagger(\theta)\mathcal{O}]\\
    &=\tr[U(\theta)R(g)\mathcal{E}(x_i)\mathcal{E}(x_i)^\dagger R^\dagger(g)U^\dagger(\theta)\mathcal{O}]\\
    &=\tr[U(\theta)\mathcal{E}(x_i)\mathcal{E}(x_i)^\dagger U^\dagger (\theta) R^\dagger(g)R(g)\mathcal{O}]\\
    &=\tr[U(\theta)\mathcal{E}(x_i)\mathcal{E}(x_i)^\dagger U^\dagger (\theta)\mathcal{O}]=f(x_i).
\end{align}

This is already a form of inherent interpretability. The response of the model to any symmetry transformation of the input is fixed by construction rather than probed empirically. This is further evidenced in Rodriguez-Grasa et al.~\cite{rodriguezgrasa2026pacbayesianapproachgeneralizationquantum}, which recently derived PAC bounds that characterize equivariant quantum models as having a restricted hypothesis class. Symmetry preservation is a hard-inductive bias, and it does not have to arise directly from a representation-theoretic construction. An alternative is to encode a data conservation law into the model. For example, Bowles et al.~\cite{bowles2023contextuality} encoded a label-level conservation law into the measured observable $\mathcal{O}$, and constrained the remaining circuit architecture to commute with that observable so that the conserved quantity was preserved throughout the computation.

Further interpretive leverage comes from decomposing the representation into isotypic components. Much like an integer decomposes into products of primes, a representation decomposes into direct sums of irreducible representations (irreps). As a consequence, the carrier space decomposes as~\cite{fulton2013representation}
\begin{equation}
    \mathcal{H}\cong\bigoplus_\lambda \mathbb{C}^{m_\lambda}\otimes V_\lambda,
\end{equation}
where $\lambda$ indexes the irreps of $G$ and the $V_\lambda$ are the carrier spaces for the irreps with multiplicity $m_\lambda$. By Schur's lemma~\cite{brian2003lie}, every equivariant gate is block diagonalized in the irrep basis and acts nontrivially only on the multiplicity spaces $\mathbb{C}^{m_\lambda}$; it is forbidden from mixing distinct irreducible sectors. This model contains strong inherent interpretability; the learnable degrees of freedom localize to a small set of labeled blocks, and feature attribution can be recast as attribution to irreducible sectors. One can ask, quantitatively, how much each isotypic component $\lambda$ contributes to a prediction by restricting the readout to the corresponding isotypic projector $\Pi_\lambda$, obtaining a symmetry-graded decomposition of the model's output that is far more legible than gate- or qubit-level saliency.

As a worked example, take $\mathcal{X}=\mathbb{R}^2$ with $G=S_2$ acting by coordinate exchange, $\sigma\cdot(x_1,x_2)=(x_2,x_1)$, together with the product angle encoding $\mathcal{E}(x)=\left(e^{-ix_1Y/2}\otimes e^{-ix_2Y/2}\right)\ket{00}$ on two qubits. Here, $S_2=\{1,\sigma\}$ is the symmetric group on two letters consisting of the identity and the transposition $\sigma=(1\ 2)$. Because the encoding factorizes over coordinates, \eqref{eq:induced-representation} holds with $R(\sigma)=\mathrm{SWAP}$. The carrier space decomposes as
\begin{equation}
    \mathbb{C}^2\otimes\mathbb{C}^2\cong\left(\mathbb{C}^{3}\otimes V_{+}\right)\oplus\left(\mathbb{C}^{1}\otimes V_{-}\right),
\end{equation}
where $V_{\pm}$ are the invariant subspaces for the trivial and sign irreps of $S_2$, carried by the triplet and singlet sectors with isotypic projectors $\Pi_{\pm}=(I\pm\mathrm{SWAP})/2$. An equivariant ansatz may be generated by symmetrized Hamiltonians such as $H_1=X_1+X_2$ and $H_2=Z_1Z_2$, and by Schur's lemma every such layer lies in $\mathcal{U}(3)\times\mathcal{U}(1)$, acting on the triplet multiplicity space while touching the singlet only through a phase. Choosing the equivariant observable $\mathcal{O}=Z_1Z_2$, the model output splits exactly as $f=f_{+}+f_{-}$ with $f_{\lambda}(x)=\tr[\rho_\theta(x)\,\Pi_\lambda\mathcal{O}\Pi_\lambda]$, where $\rho_\theta(x)=U(\theta)\mathcal{E}(x)\mathcal{E}(x)^\dagger U^\dagger(\theta)$.

The interpretive content of this decomposition is explicit. Write the encoded state as a product of single-qubit states, $\mathcal{E}(x)=\ket{\psi(x_1)}\otimes\ket{\psi(x_2)}$ with $\ket{\psi(x)}=e^{-ixY/2}\ket{0}=\cos\!\left(\tfrac{x}{2}\right)\ket{0} +\sin\!\left(\tfrac{x}{2}\right)\ket{1}$. The sector populations $p_{\pm}(x)=\tr[\Pi_{\pm}\,\mathcal{E}(x)\mathcal{E}(x)^\dagger]$ are fixed at encoding and left invariant by training; the equivariant ansatz cannot transfer amplitude between sectors. Because $\Pi_-=\ket{s}\!\!\bra{s}$ with $\ket{s}=\tfrac{1}{\sqrt{2}}(\ket{01}-\ket{10})$ spanning the singlet sector, the singlet population is the overlap $p_-(x)=|\braket{s|\mathcal{E}(x)}|^2$, and
\begin{align}
    \braket{s|\mathcal{E}(x)}
    &=\tfrac{1}{\sqrt{2}}\left[\cos\!\left(\tfrac{x_1}{2}\right)\sin\!\left(\tfrac{x_2}{2}\right)
    -\sin\!\left(\tfrac{x_1}{2}\right)\cos\!\left(\tfrac{x_2}{2}\right)\right]\\
    &=-\tfrac{1}{\sqrt{2}}\sin\!\left(\tfrac{x_1-x_2}{2}\right),
\end{align}
whence
\begin{equation}
    p_{-}(x)=\tfrac{1}{2}\sin^{2}\!\left(\tfrac{x_1-x_2}{2}\right),
\end{equation}
a function of the antisymmetric coordinate alone. Moreover, because the singlet multiplicity space is one-dimensional, Schur's lemma forces the restriction of the ansatz to that sector to be a phase, $U(\theta)\ket{s}=e^{i\varphi(\theta)}\ket{s}$, which cancels against its conjugate in the expectation value. The singlet contribution to the output is therefore
\begin{align}
    f_{-}(x)&=\tr[\rho_\theta(x)\,\Pi_-\mathcal{O}\Pi_-]\\
    &=\bra{s}\mathcal{O}\ket{s}\,p_{-}(x)\\
    &=-\tfrac{1}{2}\sin^{2}\!\left(\tfrac{x_1-x_2}{2}\right),
\end{align}
where the last equality uses $Z_1Z_2\ket{s}=-\ket{s}$, so that $\bra{s}Z_1Z_2\ket{s}=-1$. The singlet channel is thus determined entirely at encoding time, prior to any optimization. Every learnable degree of freedom resides in the triplet block, and the symmetry-graded attribution answers by construction a question that would otherwise require empirical probing: the model's dependence on the antisymmetric combination $x_1-x_2$ is confined to a single, analytically known channel, while all trained behavior is a function of the symmetric sector.

The isotypic picture connects directly to the quantum Fourier perspective discussed in Sec.~\ref{quantum_fourier_models}. The Peter-Weyl theorem~\cite{fulton2013representation} identifies the matrix coefficients of the irreps as an orthonormal basis of $L^2(G)$,
\begin{equation}
    L^2(G)\cong\bigoplus_\lambda V_\lambda\otimes V_\lambda^*,
\end{equation}
so that a $G$-structured circuit computes a truncated expansion in these coefficients. The ``frequencies'' accessible to the model are then labeled by the irreps $\lambda$, with the encoding and ansatz controlling which sectors are populated. Although the group $G = (\mathbb{R}, +)$ is not compact (so that Peter-Weyl doesn't apply), the analogous decomposition via the Fourier transform / SNAG theorem~\cite{folland2016course} still holds. The irreducible representations are the Fourier basis functions $e^{-i\omega x}$ with frequencies $\omega \in \mathbb{R}$. Hence, the quantum Fourier model in Sec.~\ref{quantum_fourier_models} can be viewed as one-dimensional irreps with labels $\omega$. Interpreting a trained model reduces, in this view, to reading off the spectrum of irrep weights it has placed mass on. This is an intrinsically meaningful quantity since each $\lambda$ corresponds to a concrete transformation behavior. The generators in the commutant supply, additionally, a compact and named basis of equivariant operations whose Lie-algebraic relations characterize the reachable directions in function space.

The interpretive guarantees are conditional on the modeling choices that produce them. The group $G$ and its representation $R$ are assumptions. An incorrectly specified or merely approximate symmetry produces equivariance statements that hold only up to a controlled error, and the resulting interpretations inherit that error. Additionally, block-diagonality (in the non-diagonal case) reduces but does not eliminate opacity; within a high-multiplicity sector, the model retains genuinely unstructured freedom, so irrep-level attribution coarse-grains rather than fully resolves the computation. Finally, invariant readouts deliberately discard the within-orbit information on which a task may depend. Indeed, the appropriate equivariance class is itself a hypothesis to be examined, not an unconditional gain. These limitations are themselves clarifying. Representation theory does not render a QML model transparent so much as it enables one to choose which questions about the model have symmetry-protected answers and which remain empirical. 

Lastly, rather than intentionally preserving a known symmetry or explicitly using the irreducible representations of a group for inherent interpretability, one can instead learn whether, and to what extent, a symmetry-like structure is present in the data through an interpretable model mechanism~\cite{bradshaw2025learning}. For example, Gili et al.~\cite{gili2024inductive} studied the inductive bias induced by non-commuting measurement observables $\mathcal{O}$ in a generative model for order effects in binary question--answer data. Each measurement is interpreted as asking a binary question, with the resulting eigenvalue corresponding to the answer. In this setting, the ordering of observables represents the ordering of questions, while the learned amount of non-commutativity quantifies how strongly changing the question order changes the answer distribution. Generating data for new question orders then preserves the learned non-commutative structure of the observables.

\subsection{Metric Geometry Inductive Biases}
Another structural bias constrains the model according to the geometry it must preserve. A quantum model never acts directly on raw classical data. Even when it is designed to preserve an identity feature map on the classical inputs \cite{wang2026explainablequantumregressionalgorithm}, it ultimately acts on the \emph{image} of an encoding map, $\mathcal{E}:\mathcal{X}\to\mathcal{H}$.
 Let us allow for mixed states so that the feature map instead takes the form $\rho:\mathcal{X}\to S(\mathcal{H})$, where $S(\mathcal{H})$ is the set of all density matrices. We further assume that the data space is a metric space with metric $d_\mathcal{X}$~\cite{searcoid2007metric,rudin2021principles,kaplansky2020set} (see also Appendix~\ref{app:topology}). After the feature map, whatever the subsequent trainable circuit does, the interpretive question of how nearby inputs are treated is settled by the metric behavior of this feature map. This is the central concern of robustness and stability, for example. Requiring that $\rho$ distort distances within fixed bounds turns that behavior from an emergent accident into an inherently interpretable property.

Explicitly, we equip the space of density matrices with the trace distance $d_{\mathcal{S}(\mathcal{H})}(\rho,\sigma)=\frac12\|\rho-\sigma\|_1$ induced by the Schatten 1-norm~\cite{watrous2018theory}. This distance function quantifies the distinguishability of quantum states. Controlling the distortion of distances is captured by the bi-Lipschitz condition
\begin{equation}
    c_- d_\mathcal{X}(x_i,x_i')\le d_{\mathcal{S}(\mathcal{H})}(\rho(x_i),\rho(x_i'))\le c_+d_{\mathcal{X}}(x_i,x_i')
\end{equation}
for some constants $0<c_-\le c_+$. The two inequalities carry distinct interpretive content. The upper bound makes $\rho$ Lipschitz; nearby inputs map to indistinguishable states and the model cannot react violently to small perturbations. Meanwhile, the lower bound is a faithfulness guarantee; distinct inputs remain quantitatively distinguishable and the embedding never silently collapses information the task may need.

Observe that a subsequent variational unitary does not change either bound for the embedding of datapoint $x_i \in \mathcal{X}$. Indeed, we have
\begin{equation}
\begin{aligned}
    d_{\mathcal{S}(\mathcal{H})}&(U(\theta)\rho(x_i)U^\dagger(\theta),U(\theta)\rho(x_i')U^\dagger(\theta))\\
    &=\frac12\|U(\theta)\rho(x_i)U^\dagger(\theta)-U(\theta)\rho(x_i')U^\dagger(\theta)\|_1\\
    &=\frac12\|U(\theta)(\rho(x_i)-\rho(x_i'))U^\dagger(\theta)\|_1\\
    &=\frac12\|\rho(x_i)-\rho(x_i')\|_1=d_{\mathcal{S}(\mathcal{H})}(\rho(x_i),\rho(x_i')),
\end{aligned}
\end{equation}
so the trainable parameters only rotate the embedded data geometry and never stretch it. It follows that the full model
\begin{equation}
    f(x_i)=\tr[\mathcal{O}U(\theta)\rho(x_i)U^\dagger(\theta)]
\end{equation}
inherits an explicit, parameter independent notion of continuity. Explicitly, we have
\begin{equation}
\begin{aligned}
    |f(x_i)-f(x_i')|&=|\tr[\mathcal{O}U(\theta)(\rho(x_i)-\rho(x_i'))U^\dagger(\theta)]|\\
    &\le\|\mathcal{O}\|_\infty\|U(\theta)(\rho(x_i)-\rho(x_i'))U^\dagger(\theta)\|_1\\
    &=\|\mathcal{O}\|_\infty\|\rho(x_i)-\rho(x_i')\|_1\\
    &\le2c_+\|\mathcal{O}\|_\infty d_\mathcal{X}(x_i,x_i'),
\end{aligned}
\end{equation}
where we have applied Holder's inequality~\cite{rudin1987real} and the Lipschitz condition. Thus, $d_\mathcal{X}(x_i,x_i')<\epsilon (2c_+\|\mathcal{O}\|_\infty)^{-1}$ implies $|f(x_i)-f(x_i')|<\epsilon$. The constant $2c_+\|\mathcal{O}\|_\infty$ is a global interpretability statement that yields certified robustness radii and smoothness readable off the encoding and the observable norm without probing the trained weights. 

The robustness statement is particularly valuable and deserves closer inspection. Suppose that the model predicts a classification of $x_i$ by thresholding the score, so that the class prediction is given by $\textnormal{sign}(f(x_i)-t)$. Define the margin by $m(x_i)=|f(x_i)-t|$, the distance of the score from the decision threshold. To change the decision, the model must move $x_i$ to an $x_i'$ beyond the margin so that $|f(x_i')-f(x_i)|\ge m(x_i)$. But the Lipschitz condition caps the change at $2c_+\|\mathcal{O}\|_\infty d(x_i,x_i')$, so a flip requires
\begin{equation}
    d(x_i,x_i')\ge\frac{m(x_i)}{2c_+\|\mathcal{O}\|_\infty}.
\end{equation}
Thus, every input within the ball of radius 
\begin{equation}
r(x_i)=\frac{|f(x_i)-t|}{2c_+\|\mathcal{O}\|_\infty}
\end{equation}
is guaranteed the same prediction. A tighter upper-distortion constant $c_+$ (or a smaller observable norm) enlarges the certified ball, which is the concrete reason controlling $c_+$ via the data encoding buys robustness.

As a worked example, take $\mathcal{X}=[0,\pi]$ with the Euclidean metric $d_\mathcal{X}(x,x')=|x-x'|$ and the angle encoding $\ket{\phi(x)}=e^{-i xY/2}\ket{0}=\cos\!\left(\tfrac{x}{2}\right)\ket{0}+\sin\!\left(\tfrac{x}{2}\right)\ket{1}$, with $\rho(x)=\ket{\phi(x)}\!\!\bra{\phi(x)}$. For pure states, the trace distance is $d_{\mathcal{S}(\mathcal{H})}(\rho(x),\rho(x'))=\sqrt{1-|\braket{\phi(x)|\phi(x')}|^2}$, and since $\braket{\phi(x)|\phi(x')}=\cos\!\left(\tfrac{x-x'}{2}\right)$, we obtain the exact expression
\begin{equation}
    d_{\mathcal{S}(\mathcal{H})}(\rho(x),\rho(x'))=\left|\sin\!\left(\tfrac{x-x'}{2}\right)\right|.
\end{equation}
The argument $\tfrac{|x-x'|}{2}$ ranges over $[0,\tfrac{\pi}{2}]$, where the bounds $\tfrac{2u}{\pi}\le\sin u\le u$ hold, and hence
\begin{equation}
    \tfrac{1}{\pi}\,|x-x'|\;\le\; d_{\mathcal{S}(\mathcal{H})}(\rho(x),\rho(x'))\;\le\;\tfrac{1}{2}\,|x-x'|,
\end{equation}
so the encoding is bi-Lipschitz with $c_-=\tfrac{1}{\pi}$ and $c_+=\tfrac{1}{2}$. The two interpretive guarantees are now explicit numbers: no pair of inputs is mapped closer together than $\tfrac{1}{\pi}$ times their data-space separation (faithfulness), and no perturbation is amplified by more than a factor of $\tfrac{1}{2}$ (stability). For the observable $\mathcal{O}=Z$ with $\|\mathcal{O}\|_\infty=1$, the model satisfies $|f(x)-f(x')|\le 2c_+\|\mathcal{O}\|_\infty|x-x'|=|x-x'|$ for every parameter value $\theta$, and a thresholded classifier with margin $m(x)=|f(x)-t|$ carries the certified radius
\begin{equation}
    r(x)=\frac{m(x)}{2c_+\|\mathcal{O}\|_\infty}=m(x),
\end{equation}
readable before training and unchanged by it. 

From the kernel perspective~\cite{schuld2021supervised}, the Lipschitz metric condition is a statement about the kernel function induced by the embedding. For a pure state encoding $x_i\mapsto\ket{\phi(x_i)}$, the fidelity kernel
\begin{equation}
    k(x_i,x_i')=|\braket{\phi(x_i)\vert\phi(x_i')}|^2
\end{equation}
measures how similar two encoded points are, and every state-space distance is a function of it. For instance, 
\begin{equation}
d_{\mathcal{S}(\mathcal{H})}(\ket{\phi(x_i)}\!\!\bra{\phi(x_i)},\ket{\phi(x_i')}\!\!\bra{\phi(x_i')})=\sqrt{1-k(x_i,x_i')}.
\end{equation}
Bounded distortion simply says that this similarity tracks the closeness in the data space; nearby inputs give $k$ near 1, distant inputs give $k$ near 0, with the \emph{bounded global rate} determined by the distortion constants. Put simply, the constants bound how much the embedding $\rho(x_i)$ changes as the data $x_i$ changes, and thus the speed at which fidelity kernel values change as a function of distance in the data. This notion of \emph{smoothness} in kernel values directly relates to the quantum Fourier kernel discussed earlier in Sec.~\ref{quantum_fourier_models}. Data-encoding Hamiltonians with larger energy gaps (i.e.~higher frequencies) can lead to less smooth kernels, where the smoothness is lower and upper bounded by the distortion constants. 

Additionally, the eigenvalues of the kernel matrix over the data (i.e.~the spectrum) set which patterns the model learns easily and how many samples it needs; a quickly decaying spectrum means a few dominant features, while a flat one means the model prioritizes all directions similarly. Distinct from the Gram matrix spectrum, the kernel also defines a Riemannian metric tensor on the data space whose eigenvalues are the local stretch factors; large ones mark the input directions the model is most sensitive to, small ones the directions it barely perceives. Controlled distortion is exactly the demand that these factors stay bounded away from zero and infinity, fixing which perturbations the model is built to see.

\subsection{Topological Inductive Biases}
Topology (see Appendix~\ref{app:topology} for a soft intro) supplies a coarse and deformation-robust inductive-bias. It constrains the model by the shape of the data, by which we mean its connectivity, its loops, its voids, etc. These are the features that survive any continuous deformation regardless of how distances are stretched, and this is precisely the regime in which the symmetry and metric biases are hardest to certify. When $\rho:\mathcal{X}\to\mathcal{S}(\mathcal{H})$ is a homeomorphism onto its image, it carries the data manifold into state space without tearing or gluing. Moreover, when $\rho$ is a genuine embedding, it induces an isomorphism of homology in each degree $\rho_*:H_k(\mathcal{X})\to H_k(\rho(\mathcal{X}))$ so that connected components, cycles, and higher voids are transmitted intact~\cite{hatcher2002algebraic}.

Variational unitaries cannot disturb this preservation of structure; a unitary $U(\theta)$ acts as a diffeomorphism of the state manifold, hence a homeomorphism, so it carries $\rho(\mathcal{X})$ to a homeomorphic copy and cannot create or destroy a topological feature, it only relocates it. The topology of the embedded data is therefore a property of the encoding alone, invariant under the entire trainable family. The interpretive payload is immediate; a class that is connected in $\mathcal{X}$ remains a single component throughout the circuit, and any splitting or fusion of classes can only occur at the measurement readout. Topological mismatches between the data and the model's decision regions are thereby localized to the readout map, where they can be diagnosed directly.

The data shape is summarized by Betti numbers: $\beta_0$ counts connected components, $\beta_1$ counts independent loops, $\beta_2$ counts enclosed voids, and so on. Because real data is finite and noisy, one computes these numbers across a range of scales, producing a barcode in which each feature is a bar running from the scale it appears to the scale it disappears. Long bars are interpreted as robust structure while short bars are considered noise. This construction is known as persistent homology and there is a vast literature on this topic in topological data analysis~\cite{edelsbrunner2002topological,zomorodian2004computing,carlsson2009topology,ghrist2008barcodes,edelsbrunner2010computational}. In fact, recent work has even been done on quantum encodings preserving persistent homology by Parzygnat and Vlasic~\cite{parzygnat2026quantum}.

The barcode is the inherently interpretable object. It is a compact, readable fingerprint of what global structure the encoding keeps. Laying the data's barcode beside the encoded data's barcode tells you, feature by feature, what the model preserved, what it collapsed, and what it spuriously created, which is a far more legible diagnostic than any gate- or qubit-level inspection.

As a worked example, take the data space to be the circle, $\mathcal{X}=S^1=\{t\in[0,2\pi)\}$, with Betti numbers $\beta_0=1$ and $\beta_1=1$: one component, one loop. Consider the family of single-qubit encodings $\ket{\phi_\omega(t)}=\cos\!\left(\tfrac{\omega t}{2}\right)\ket{0}+\sin\!\left(\tfrac{\omega t}{2}\right)\ket{1}$, whose density matrix $\rho_\omega(\theta)$ has Bloch vector $(\sin(\omega t),\,0,\,\cos(\omega t))$, tracing the great circle in the $xz$-plane of the Bloch sphere. For $\omega=1$, the map $t\mapsto\rho_1(t)$ is a continuous bijection from the compact space $S^1$ onto its image, hence a homeomorphism, and the induced map $(\rho_1)_*:H_k(S^1)\to H_k(\rho_1(S^1))$ is an isomorphism in every degree; the loop is transmitted intact with $\beta_0=\beta_1=1$ before and after encoding. Any subsequent variational unitary rotates the Bloch sphere rigidly, relocating this circle but never cutting it. For $\omega=2$, by contrast, $\rho_2(t)=\rho_2(t+\pi)$, so the encoding is a double cover that glues each pair of antipodal data points to a single state. The image is still a circle, but $\rho_2$ is no longer injective, and the identification is a genuine topological event with two data classes supported on antipodal arcs of $S^1$, disjoint in $\mathcal{X}$ with $\beta_0=2$, are fused at encoding time into a single component with $\beta_0=1$. No choice of ansatz or observable can undo this fusion, since everything downstream acts on the glued image; the barcode comparison detects it immediately, as the two long $H_0$ bars of the data collapse to one in the encoded barcode. The readout localization is explicit; for $\omega=1$ and the observable $\mathcal{O}=Z$, the model computes $f(t)=\tr[Z\rho_1(t)]=\cos(t)$ up to the rigid rotation supplied by $U(\theta)$, and thresholding at $0$ cuts the embedded circle into two arcs. The connected data space is split into two decision regions not by the encoding or the ansatz, both of which preserved the loop, but by the measurement, exactly where the general argument says any splitting must occur.

Beyond the interpretability advantage, the topological prior is a natural thing to consider on quantum hardware because Betti numbers are spectral quantities. Indeed, an equivalent characterization of $\beta_k$ is the dimension of the kernel of the combinatorial Laplacian operator $\Delta_k$~\cite{eckmann1944harmonische,horak2013spectra,lim2020hodge}. The exact definition of this operator is beyond the scope of this work, and it suffices to note that it is built from the data's connectivity. This kernel dimension, being the number of zero eigenvalues of the operator, is precisely what quantum algorithms for persistent homology estimate~\cite{lloyd2016quantum,ubaru2021quantum,mcardle2026streamlined,berry2024analyzing}. The interpretable invariant is thus computable natively on a quantum device, as studied by Gyurik, Cade, and Dunjko~\cite{gyurik2022towards}, as well as Schmidhuber and Lloyd~\cite{schmidhuber2023complexity}. The spectral gap above zero quantifies how robust each feature is.

\section{Conclusion} 

We hope that this perspective offers readers a different lens through which to understand the \emph{value} that quantum models can bring to machine learning. Put simply, our argument is that a model's value can be found in the characterization of its inherent interpretability offerings for a specific ML task -- as a model's inherent components and mechanisms that control its output behaviors are important for understanding, design, and discovery in a practical ML setting. In our motivating example, the mathematical complementarity of RFF and quantum Fourier models reveal that each offers \emph{different} tools for GP kernel design. RFF models offer a top-down route through a chosen frequency distribution, whereas quantum models offer a bottom-up route through Hamiltonians choices. Understanding how to best construct and use a Hamiltonian toolkit to produce useful kernel behavior (e.g.~smoothness, symmetry) for uncertainty quantification tasks remains an open future direction, as well as which Hamiltonians require quantum hardware. 

In our review of inductive biases in QML, we show that quantum information tools to design inherently interpretable models span multiple mathematical areas -- representation theory of groups, metric geometry, and topology. We hope to encourage readers to consider how these inductive biases might be applied to the design of models for high-risk contexts, as well as settings with limited data available. Some examples include uncertainty quantification, missing data, domain-shift invariance, and data privacy. 

The inherent interpretability perspective can further QML research, practice, and pedagogy. If we understand our desired model behavior as well as the internal mechanisms that control it, we can co-design quantum ML models for deployment with intention. If we teach the next generation of learners that quantum information, or rather these broader mathematical areas, are simply tools for ML model design -- we can change the definition of what it means to \emph{think} like a machine learning engineer. 

\begin{acknowledgments}
We wish to acknowledge Dr.~Ethan Harvey for initial manuscript feedback, and specifically pointing us to related works that propose RFF approaches and Bayesian techniques for Gaussian process design. We are grateful to ChatGPT for its feedback on the manuscript, which helped us with notation clarity and consistency, as well as the articulation of our ideas.
 \end{acknowledgments}

 \section*{Funding}
The authors declare no relevant funding.

\appendix

\section{A soft introduction to Bayesian methods in ML}\label{soft_intro_Bayesian}

\subsection{Probabilistic machine learning}

Bayesian methods offer pathways to bake structure into ML settings, but they require one to view a standard regression or classification problem as one that learns parameters of a probability distribution or density function. Consider a training dataset $\mathcal{D} = \{x_i, y_i\}_{i=1}^N$, where $x_i \in \mathbb{R}^d$ and $y_i \in \{l_1,\dots,l_K\}$. Here, $y_i$ is one of $K$ categories. We choose to model the relationship between the inputs and outputs by a simple linear model: 
\begin{equation}
    \hat{y_i} = x_i^T\theta
\end{equation}
The weight matrix $\theta\in \mathbb{R}^{d \times K}$ is typically trained to minimize a negative cross-entropy loss function over all datapoints: 
\begin{equation}
\min_{\theta} \mathcal{L}(\theta) = -\frac{1}{N}\sum_{i=1}^N \sum_{k=1}^K P_{ik}\log \hat{P}_{ik}(\theta) \label{cross_entropy_loss}
\end{equation}
Here, $P_{ik}$ is the true probability that a datapoint $x_i$ belongs to the category $k$. As data can only exist in one category, this probability is simply the indicator function $\mathbb{I}[y_i = k]$. The probability $\hat{P}_{ik}$ is the result of a softmax transformation on the model output $\hat{y} \in \mathbb{R}^K$, where low deterministic outcomes $\hat{y}$ are mapped to high probability scores. 

So far, we have outlined a standard multi-class classification problem. However, one can view these generated scores as parameters for the categorical likelihood distribution $p(y_i|\theta) = \text{Cat}(y_i|\hat{P}_{i1},\dots,\hat{P}_{iK}, \theta)$. Hence, minimizing the negative cross-entropy is equivalent to maximum likelihood estimation (MLE): 
\begin{equation}
\max_{\theta} \mathcal{L}(\theta)
= \frac{1}{N}\sum_{i=1}^N   \log \text{Cat}(y_i|\hat{P}_{i1},\dots,\hat{P}_{iK}, \theta)\label{cat_prob}
\end{equation}
The probabilistic ML frame encodes choices regarding the statistical independence between variables. Each observation $y_i$ is independent,conditioned on the weight parameters $\theta$ and the data $x_i$. As such, each conditional distribution $p(y_i|\theta)$ is independent from one another -- a bias revealed by the summation over all of the likelihood distributions after applying the $\log$ to their product. 

Once in the probabilistic frame, Bayesian methods enable one to take advantage of the entire joint distribution between variables $p(y,\theta)$. In the example above, we only value obtaining parameters for the factorized conditionals $\prod p(y_i|\theta)$. In doing this, we treat $\theta$ as a point estimate to optimize. Instead, one could choose to incorporate uncertainty over the parameters $p(\theta)$ into the modeling. If this prior is fixed, then the problem becomes one of finding the parameters for the $\log$-likelihood distribution that are possible under the \emph{constraint} of the prior $p(\theta)$. Intuitively, the prior is meant to constrain the space of possible point estimates for $\theta$ by incorporating \emph{known uncertainty}. 
The likelihood and prior terms interact according to Bayes' rule: 
\begin{equation}
p(\theta|y) = \frac{p(y|\theta)p(\theta)}{p(y)}\label{Bayes_rule}
\end{equation}
The term on the left of the equals sign is the \emph{posterior}, which quantifies the uncertainty over the parameters $\theta$ after observing all of the training data. The denominator on the right of the equals sign is the \emph{evidence} -- also referred to as the marginal likelihood-- which is the likelihood of observing the output data given all of the input data and summed over all latent parameter possibilities. We will come back to the utility of these terms shortly. 

One important note about Bayes' rule is its own commutative structure. Swapping the order of the variables in a joint distribution (the denominator in Bayes rule) offers the same result: 
\begin{equation}
p(x,y) = {p(x|y)p(y)} = {p(y|x)p(x)}
\end{equation}
Thus, order structure is not baked into Bayesian ML methods inherently. To account for order, one requires additional variables in the probabilistic model or a graphical base model that incorporate an arrow of causality. 

Notice that a probabilistic framing is compatible, but separate from statistical learning theory (SLT), where minimizing the negative cross-entropy in Eq.~\ref{cross_entropy_loss} is akin to computing the empirical risk that approximates the true risk in expectation: 
\begin{equation}
\text{True risk}
= \mathbb{E}_{x,y\sim p(x,y)}[\mathcal{L}(\theta)]\label{true_risk}
\end{equation}
This theoretical assumption typically underlies standard generalization bounds for a model class. Consider an ML practitioner who has a training dataset consisting of images with categorical labels (e.g.~dogs, cats, bunnies, horses). However, this practitioner has very few images of bunnies (bunnies are quick and hard to take pictures of!!) compared to the rest of the classes. In practice, we refer to this phenomena as \emph{class imbalance}. To fix this imbalance, a practitioner might up-weight the minority class predictions in the cost function in Eq.~\ref{cross_entropy_loss}. Adding this term requires one to update Eq.~\ref{true_risk} to a weighted true risk; otherwise the empirical risk would be an incorrect estimator. Thus, in order to understand practical performance limits, the theory needs to reflect the choices that practitioners make to model the data well. It quickly becomes a challenge for theorists to bake in the rapidly changing design choices that are necessary to account for the interaction between ML models and the properties of real-world data. In addition, generalization bounds often characterize the resources required for all functions in a given model class to achieve a specified generalization error. As a result, they primarily describe worst-case requirements that do not take into account 
alignment between the model and the data. We include this brief comment not to discourage researchers from pursuing SLT, but rather to distinguish SLT from the probabilistic framing in ML. SLT offers a framework for analyzing model classes with general data assumptions, whereas the probabilistic framing offers a workspace for principled model design and iteration.

In summary, viewing ML problems through a probabilistic framing allows for designing models with inherent interpretability: one can explicitly define the variables within the problem, choose independence structure to reflect real-world conditions, and incorporate known uncertainties that are of value. One can also choose probabilistic function families that align with the domain of the problem (e.g.~binary classification: Bernoulli, multi-class: categorical, circular dynamics: von Mises). Bayesian methods offer clever ways in which inductive biases can further be woven into probability distributions. Crucial to practical implementation are the estimation and optimization methods for the posterior and evidence terms. We explore this next, and briefly discuss how some assumptions made for computational tractability bias the model behavior. 

\subsection{Posterior estimation for prediction}

As stated previously, the posterior in Eq.~\ref{Bayes_rule} reveals the uncertainty over the base model parameters, given all of the training observations. Our aim is to estimate this posterior, so that it can be used as the ``updated prior" when making predictions on unseen data. For certain likelihood and prior distribution families, the posterior can be computed analytically (e.g.~Gaussian distributions). However, often times this posterior cannot be computed with an analytical form and exact numerical evaluations are intractable (especially when the base model is a $>1$ billion parameter neural network). Rather, one approximates the posterior using the proportionality condition $p(\theta|x,y) \propto p(y|\theta,x)p(\theta)$. While there are likely a plethora of methods out there, we mention two popular choices: 

\noindent \textbf{MAP + Laplace.} Maximum a posteriori (MAP) estimation is MLE under a prior. Rather than only maximizing the $\log-$likelihood, one maximizes the following: 
\begin{equation}
\max_{\theta} \mathcal{L}(\theta)
= \frac{1}{N}\sum_{i=1}^N   [\log p(y_i|\theta)] + \log p(\theta) 
\end{equation}
In practice, we minimize the negative log-likelihood. In doing so, one obtains an optimal point estimate $\theta^*$ for the negative $\log$ posterior $-\log p(\theta^*|y)$. We then estimate the shape of the posterior function via a Laplace approximation, which simply means that we use the second-order term from the Taylor expansion to approximate local curvature around the optimal point. The second-order term maps directly to the quadratic term in the multi-variate Gaussian distribution: 
\begin{equation}
\begin{aligned}
\text{Gaussian exponential term:} 
&\quad -\frac{1}{2}[(y - \mu)^T\Sigma^{-1}(y-\mu)] \\
\text{Second-order Taylor term:} 
&\quad -\frac{1}{2}[(\theta - \theta^*)^T H(\theta - \theta^*)]
\end{aligned}
\end{equation}
This reveals that the Hessian $H$ (second derivative matrix) $\nabla_{\theta^*}^2 [-\log p(\theta^*|y)]$ operates as the precision matrix $\Sigma^{-1}$. We invert the Hessian matrix to obtain the covariance matrix for the posterior. This computation is again expensive when the number of parameters $\theta$ is large, such that further assumptions on the Hessian matrix are typically made for tractability (e.g.~diagonal). Note that these further assumptions introduce further \emph{bias} into the problem; however it is often less studied how these biases influence the posterior. 

\noindent \textbf{Variational inference (VI).} An oftentimes more tractable method to estimate the posterior is to approximate it with an assumed distribution and optimize the parameters of this distribution directly. This requires one to make assumptions regarding the posterior distribution family and its factorization -- i.e.~which posterior variables are independent. For example, an assumed posterior can be a Gaussian distribution with an isotropic diagonal covariance, which implies that each model parameter (an entry in the matrix $\theta$) is independent of one another and has the same variance. Again, these assumptions introduce further bias into the problem, which has more recently been recognized with initial attempts to study the resulting behavior\cite{harvey2026learning, harvey2026occams}.  

Estimating the posterior through optimization requires one to maximize the \emph{Evidence Lower Bound} (ELBO): 
\begin{equation}
\begin{aligned}
\mathrm{ELBO}(\theta, \phi)
&=
\sum_{i=1}^N
\mathbb{E}_{\theta \sim q_{\phi}}\log 
\left[
p(y_i\mid\theta)
\right] \\
&\quad
-
\mathrm{KL}
\left(q_{\phi}(\theta)
\,\middle\|\,
p(\theta)
\right)
\end{aligned}
\label{ELBO}
\end{equation}
Here, the assumed posterior $q_{\phi}$ contains distribution parameters $\phi$ to be optimized alongside the base model parameters $\theta$. Notice that the first term of the ELBO is simply the data-likelihood, but averaged over parameters $\theta$ that are sampled from the assumed posterior. Thus, the optimization encourages a model that maximizes likelihood under parameters $\theta$ that are sampled from the posterior. The second term introduces a constraint from the prior, which biases the posterior to be similar to the prior. The KL represents the standard Kullback Libeler (KL) divergence between two distributions, indicating their similarity. Optimizing the ELBO parameters is akin to optimizing a posterior distribution directly. One can up-weight or down-weight the data-likelihood or prior term depending on the ML context. 

\subsection{Bayesian evidence for hyper-parameter optimization}\label{optimization}
The evidence term in Eq.~\ref{Bayes_rule} is most useful when one is in search of a \emph{good} probabilistic model, where good is defined by Occam's razor -- a simple model that fits the data well. The evidence is typically written as the integral over the model parameters with respect to the data:  
\begin{equation}
p(y) = \int p(y|\theta) p(\theta)d\theta\label{evidence}
\end{equation}
The evidence quantifies the quality of the model's ability to capture the data well under the simplicity of the prior parameters. Intuitively one can see that if the prior has a non-zero density over too parameter combinations, the integral is a sum over many smaller terms -- since we have the constraint $\int p(\theta)d\theta=1$. If the chosen parameters produce a small likelihood term, this also penalizes the evidence to be small. Thus, the highest evidence for a set of data is the model that achieves Occam's razor. The evidence can be used as a metric to compare probabilistic models for a specific dataset, and thus can be used as an optimization tool. 

Optimizing the evidence (or the $\log$ evidence) is heavily linked to base model hyper-parameter optimization that eliminates the need for a validation set. One typically uses a validation set to pick a base model containing hyper-parameters that perform best on unseen data. If the base model hyper-parameters can viewed as probabilistic parameters and one uses the evidence as the metric for a \emph{good model}, then one can bypass training many models via a grid search over hyper-parameters by optimizing the evidence directly. Since the evidence term is typically intractable to compute, one can use the ELBO in Eq.~\ref{ELBO} to optimize the evidence indirectly.

The ELBO is derived from lower bounding the $log$ evidence in the following way: 
\begin{flalign}
&\text{Evidence:}\quad
\log p(y )
= \log \int 
\frac{q_{\phi}(\theta)}{q_{\phi}(\theta)}
p(y, \theta) \, d\theta && \notag \\
&\hspace{5.8em}
= \log \mathbb{E}_{\theta \sim q_{\phi}}
\left[
\frac{p(y, \theta)}{q_{\phi}(\theta)}
\right], && \notag \\[0.75em]
&\text{Jensen's inequality:}\quad
\mathbb{E}[\log f(x)]
\leq \log \mathbb{E}[f(x)], && \notag \\[0.75em]
&\text{Optimize:}\quad
\mathbb{E}_{\theta \sim q_{\phi}}
\left[
\log \frac{p(y, \theta)}{q_{\phi}(\theta)}
\right] &&
\end{flalign}

Another way to view the ELBO is a maximization of the joint distribution between output data and model parameters, while maximizing the entropy of the assumed posterior. Hence, we directly optimize for structure while attempting to use the expressive power of the posterior distribution. The chosen posterior distribution family is going to heavily influence this expressive power. 

Optimizing the evidence indirectly through VI to learn a well-regularized model is a technique used in the Gaussian process kernel design process, discussed in Sec.~\ref{Bayesian_GPs}. The parameters of the frequency distribution chosen to match a target kernel function can be treated as model hyper-parameters \cite{harvey2026learning}, or the frequencies themselves can be learned as hyper-parameters given an initialization biased by a chosen kernel \cite{lazaro-gredilla2010sparse}. 

\subsection{Exact GP inference}\label{exact_inference}

A GP is defined as a Gaussian distribution over functions, $f(x) \sim \mathcal {GP}(\mu, k(x_i, x_i'))$, where $\mu$ is a vector of mean function evaluations and the covariance matrix is determined by pairwise input similarities encoded by the kernel function $k(x_i,x_i')$. In exact GP inference, one typically starts by defining a GP prior over functions $f(x) \sim \mathcal GP(0_N, k_{xx'})$, and a corresponding likelihood $p(y|f(x))$. The $N \times N$ matrix $k_{xx'}$ contains information between all pairs of datapoints. We show the function-space posterior predictive for when the likelihood is Gaussian $\mathcal{N}(f(x), I\sigma^2_{\text{noise}})$. For a new test point $x_i^{\text{test}}$, one computes the mean and variance for the posterior predictive $p(f(x_i^{\text{test}})|y)$ directly using the following closed-form expressions: 
\begin{equation}
\begin{aligned}
&\text{Exact GP posterior predictive:} \\
&p\!\left(f(x_i^{\text{test}}) \mid y\right)
=
\mathcal{N}
\left(
\mu_{\mathrm{pred}},
\sigma^2_{\mathrm{pred}}
\right), \\[0.5em]
&\mu_{\mathrm{pred}}
=
k_{x^{\text{test}}x}
\left(
k_{xx} + \sigma^2_{\mathrm{noise}} I
\right)^{-1}
y, \\[0.5em]
&\sigma^2_{\mathrm{pred}}
=
k_{x^{\text{test}}x^{\text{test}}}
-
k_{x^{\text{test}}x}
\left(
k_{xx} + \sigma^2_{\mathrm{noise}} I
\right)^{-1}
k_{xx^{\text{test}}}
\end{aligned}
\label{eq:posterior_predictive}
\end{equation}
\normalsize

Here, $k_{x^{\text{test}}x}$ is a $N-$dimensional vector containing the overlaps of the training data with the new test point. These values are numerically equivalent to those in the weight-space view in Eq.~\ref{posterior_predictive} when the kernel is $k_{xx} = \phi(x)\phi(x)^T$, but are written in a different form. 
The inversion of the data-space matrix $k_{xx}$ is the computational bottleneck for exact GP inference - regardless of whether the feature map is a Fourier embedding or an identity function (i.e.~no feature embedding in the weight-space view).

\section{A soft introduction to the representation theory of groups}\label{app:rep-theory}
This appendix serves as an introduction to the representation theory of groups for those unfamiliar with the basic concepts. We begin with an understanding of the notion of a group, which is intuitively thought of as a collection of symmetries of some kind. Symmetries typically have several nice properties, which we illustrate with the example of the symmetries of a square. First, combining two symmetries should yield another symmetry; a 90 degree rotation of a square is a symmetry of the square, and composing this rotation with itself results in another symmetry, namely the 180 degree rotation of the square. Second, doing nothing is a symmetry; leaving the square alone leaves it unchanged. Third, symmetries are invertible by another symmetry; rotating the square by 90 degrees is a symmetry, and rotating by -90 degrees is also a symmetry that undoes the first. Finally, symmetries are typically associative; if $r$ denotes the rotation of the square and $f$ a flip along some symmetry axis, we have $r(fr)=(rf)r$. This last property is not to be confused with commutativity, which equates all orderings of the composition of many symmetries. Most groups are not commutative (also known as abelian). Indeed, in our square example, we have $rf\ne fr$.

Explicitly, a group is a set $G$ equipped with an associative binary operation from $G\times G$ to $G$ that contains an identity element and an inverse for every one of its elements. The binary operation is sometimes denoted $g*h$, but typically the $*$ symbol is dropped in favor of juxtaposition, so that the same binary operation is denoted $gh$ for $g,h\in G$. The requirement that an identity exist means that there is an element $1\in G$ such that $1g=g1=g$ for all $g\in G$. The requirement that inverses exist means that for every $g\in G$, there is a $g^{-1}\in G$ such that $gg^{-1}=1=g^{-1}g$. The requirement that the binary operation be associative means that $g(hk)=(gh)k$ for all $g,h,k\in G$. Thus, the definition of a group captures precisely the properties of a collection of symmetries of an object outlined above.

Sometimes groups are essentially equivalent in the sense that they have exactly the same structure up to a relabeling of the elements. In this case, we call the groups isomorphic. Let us capture this notion of essential equivalence mathematically. The two properties unique to a group are its set and binary operation structures. Consider two groups $(G,*_G)$ and $(H,*_H)$. A homomorphism is a map $\varphi:G\to H$ such that $\varphi(g_1*_G g_2)=\varphi(g_1)*_H\varphi(g_2)$. Such a map takes the binary operation in $G$ and preserves it in $H$; composing two elements of $G$ and then mapping the result to $H$ is the same as mapping two elements to $H$ and then composing them in $H$. Thus, the elements of the image of $G$ under a homomorphism are forced to behave in the same way they did in $G$, but now viewed as elements of $H$. If in addition to being a homomorphism, $\varphi$ is a bijection (equivalently, invertible), then at the set level, it is just a relabeling of the elements of $G$ with the elements of $H$, thereby preserving both the binary operation and the set structure. Such a map is called an isomorphism, and two groups that allow an isomorphism between them are called isomorphic.

The definition of a group can be fairly abstract, but an isomorphism can carry the same structure from the abstract setting to a concrete one. Even when the map is not an isomorphism but a homomorphism, it still preserves the structure of the binary operation on the group, and the image of the map can be identified with a subgroup of the co-domain. In fact, the first isomorphism theorem guarantees that $\varphi(G)\cong G/\ker(\varphi)$, where $\cong$ denotes the isomorphism property. Thus, faithfully representing $G$ as a subgroup of $H$ amounts to finding a homomorphism $\varphi:G\to H$ with a trivial kernel. For this reason, such a map is sometimes called faithful. This viewpoint motivates the definition of a representation. In quantum computation, we work largely in the group $\mathcal{U}(\mathcal{H})$ consisting of all unitary operations on some Hilbert space $\mathcal{H}$. We would therefore like a way to understand symmetry with respect to $G$ in the context of its unitary action on $\mathcal{H}$. We do so by finding a faithful homomorphism $\varphi:G\to\mathcal{U}(\mathcal{H})$, so that $\varphi(G)\cong G$, allowing us to think about $G$ as a subgroup of the unitary group. Such a map is called a faithful unitary representation of $G$.

A faithful unitary representation $\varphi:G\to\mathcal{U}(\mathcal{H})$ transfers the abstract notion of symmetry in $G$ to the concrete setting of unitary operators acting on a Hilbert space. It allows us to talk about states $\ket{\psi}\in\mathcal{H}$ which are invariant under the action of $G$: $\varphi(G)\ket{\psi}=\ket{\psi}$. We therefore think of invariant states as satisfying the symmetry in $G$. It should be noted, however, that invariance depends on the chosen representation $\varphi$. It is a common abuse of notation to say $\ket{\psi}$ is $G$-invariant, when we really mean $\varphi(G)$-invariant. 

In the setting of quantum machine learning, this is a valuable construction because our data is usually classical before getting encoded into a Hilbert space. The encoder therefore needs to transfer the notion of symmetry from the data space to the Hilbert space in order for invariance properties to be subsumed. However, the encoder is not itself a map between the groups, but a map between the spaces carrying the representations in the data and Hilbert space pictures. Indeed, the encoder has the form $\mathcal{E}:\mathcal{X}\to\mathcal{H}$. The symmetry in the data picture is given by some natural action of the group $G$ on the set $\mathcal{X}$, which we denote $g\cdot x$. If $R:G\to\mathcal{U}(\mathcal{H})$ is a faithful unitary representation of $G$ acting on a Hilbert space, then the notion of $G$-invariance in $\mathcal{X}$ is carried to a notion of $G$-invariance in $R(G)\subset\mathcal{U}(\mathcal{H})$ by the encoding if
\begin{equation}
    \mathcal{E}(g\cdot x)=R(g)\mathcal{E}(x).
\end{equation}
Such an encoding is called an equivariant map. Intuitively, this equation says that invariance before mapping $x$ to $\mathcal{E}(x)$ implies invariance after the mapping. The converse is also true precisely when the encoder is injective.

\section{A soft introduction to topology and metric spaces}\label{app:topology}

 Topology is often called rubber sheet geometry because it captures the geometric structure of an object that we are allowed to deform in a continuous way, such as stretching and compressing but not ripping/tearing. The pointwise theory captures a sufficiently abstract notion of continuity of functions that is often useful in analysis, and can be used in the analysis of a machine learning model. On the data level, the field of topological data analysis seeks to uncover topological structure in data sets which can be used to bias or otherwise inform a model. For these reasons, it can be useful to transfer topological structure from the data space to the Hilbert space.

 Explicitly, a topology on a set $\mathcal{X}$ is a family $\tau$ of subsets of $\mathcal{X}$ satisfying the following three properties: (i) both the empty set $\emptyset$ and $\mathcal{X}$ are contained in $\tau$, (ii) $\tau$ is closed under arbitrary unions of its elements, and (iii) $\tau$ is closed under finite intersections of its elements. The elements of $\tau$ are called open sets and the pair $(X,\tau)$ is called a topological space. Typically the topology is clear from the context and $X$ gets referred to as a topological space.

 With this definition in hand, a function $f:\mathcal{X}\to \mathcal{Y}$ between topological spaces is called continuous if the preimage of every open set in $\mathcal{Y}$ is an open set in $\mathcal{X}$; that is, $f^{-1}(Y)\in\tau_\mathcal{X}$ for all $Y\in\tau_\mathcal{Y}$. Thus, analogous to the homomorphism in group theory, a continuous map is the structure preserving map in topology. However, continuity alone captures only the structure of the topology itself, not the set structure of the topological space. For this reason, one also considers maps $f:\mathcal{X}\to\mathcal{Y}$ which are bijections such that $f$ and $f^{-1}$ are both continuous. Such maps are called homeomorphisms and topological spaces that permit a homeomorphism between them are called homeomorphic. Analogous to the isomorphism in group theory, this is the map that produces a notion of essential equivalence in topology.

 It is often the case that a topology is induced by another more geometric structure, and this is certainly the case for a metric space. A metric space is a set $\mathcal{X}$ equipped with a distance function (or metric) $d$, which satisfies three key properties: (i) symmetry $d(x,y)=d(y,x)$, (ii) positive semi-definiteness $d(x,y)\ge0$ and $d(x,y)=0$ if and only if $x=y$, and (iii) the triangle inequality $d(x,z)\le d(x,y)+d(y,z)$. Given a metric space, we construct a topology by defining an open set in $\mathcal{X}$ to be a set $U$ such that every point $u\in U$ contains a neighborhood of points all contained in $U$. Explicitly, the notion of a neighborhood is given by the open ball of radius $r$:
 \begin{equation}
     B_r(u)=\{x\in \mathcal{X}:d(u,x)<r\}.
 \end{equation}
Thus, $U$ is called open if for any $u\in U$, there exists an $r>0$ such that $B_r(u)\subseteq U$. The collection $\tau_d$ of all such open sets contains the empty set and the full set $\mathcal{X}$ trivially. Furthermore, if $\cup_\alpha U_\alpha$ is an arbitrary union of open sets from $\tau_d$, then for every $x$ in this union, there is at least one $\beta$ such that $x\in U_\beta$, and since $U_\beta$ is open, there is an $r>0$ such that $B_r(x)\subseteq U_\beta\subseteq \cup_\alpha U_\alpha$, showing that the arbitrary union is again open. Finally, if $\cap_{i=1}^n U_i$ is a finite intersection of open sets from $\tau_d$, then every $x$ in the intersection is in every open set $U_j$ for all $j=1,\ldots,n$. Thus, there are radii $r_j$ such that $B_{r_j}(x)\subseteq U_j$ for all $j=1,\ldots, n$. Taking $r=\min_j r_j$, it follows that $B_r(x)\subseteq B_{r_j}(x)\subseteq U_j$ for all $j=1,\ldots,n$. Thus, $B_r(x)\subseteq\cap_{i=1}^n U_i$, and it follows that the finite intersection is open. This shows that $\tau_d$ is indeed a topology, and so having a metric space structure is a strictly stronger condition than having a topology.

In general, determining whether two topological spaces are homeomorphic is challenging, but there exist many necessary conditions which, when violated, show that two spaces are not homeomorphic. These conditions are called topological invariants, and an example mentioned in the main text is the set of homology groups of the topological space. In fact, there are many types of homology, but we describe only one here.

Consider a graph $X=(\mathcal{V},\mathcal{E})$ consisting of a set of vertices $\mathcal{V}$ and a set of edges $\mathcal{E}$. Let us denote the vertices by $v_i$ for $i=1,\ldots,\lvert\mathcal{V}\rvert$. The edges are pairs of vertices which we will denote $e_{ij}=(v_i,v_j)$. When the ordering of the vertices of an edge matters, we call the graph directed or a digraph. As an example, consider a simple square digraph with a diagonal edge running from the north east to the south west corner. It has vertices $v_1,\ldots, v_4$ and edges $e_{12}, e_{23}, e_{34}, e_{41}, e_{24}$. Let us include also the faces $f_{124}$ and $f_{234}$ enclosed by the edges $e_{41}, e_{12}, e_{24}$ and $e_{23}, e_{34}, -e_{42}$, respectively. The factor of $-1$ in front of $e_{42}$ denotes that the edge is traversed in the reverse direction to its orientation. This structure, including the faces, is known as a cell complex which consists of $k$-dimensional open balls (for $k=0,1,2$ in this example). Each vertex is regarded as a 0-dimensional open ball called a 0-cell, each edge is a 1-dimensional ball called a 1-cell, each face is a 2-dimensional ball called a 2-cell, and so on. By identifying the various cells of a cell complex, this planar diagram can represent topological spaces that are visually more complex. For example, if we identify the vertical 1-cells $e_{41}$ and $-e_{23}$, we see that the cell complex represents a cylinder. The two horizontal 1-cells form the end circles of the cylinder and the diagonal 1-cell spirals down the surface. If we further identify $e_{12}$ with $-e_{43}$, the two end circles of the cylinder coincide, forming the torus $\mathbb{T}^2$. Thus, the cell complex describing the torus contains topological information about the torus.

Formally, linear combinations $\sum_ic_iv_i$ of the 0-cells with $c_i\in\mathbb{Z}$ are called 0-chains, and the collection of all 0-chains is an abelian group under componentwise addition, denoted $C_0(X,\mathbb{Z})$. We write $C_0$ whenever the space $X$ and coefficients $\mathbb{Z}$ are clear from the context. Similarly, the collection of all linear combinations of 1-cells over $\mathbb{Z}$ forms the abelian group $C_1$ of 1-chains and the linear combinations of 2-cells over $\mathbb{Z}$ form the abelian group $C_2$ of all 2-chains. Let us now define a map $\partial_1:C_1\to C_0$ by $\partial_1(e_{ij})=v_j-v_i$ and extend it linearly. Similarly, define a map $\partial_2:C_2\to C_1$ by sending a 2-cell to the sum of its boundary 1-cells traversed in the clockwise direction, where traversing a 1-cell oriented in the opposite direction produces a minus sign. For example, $\partial_2(f_{234})=e_{23}+e_{34}-e_{24}$. Such maps are called boundary maps, and by convention, we will take the 0-boundary to be trivial $\partial_0=0$. Since there are no 3-cells in our cell complex, $C_3$ is trivial and $\partial_3=0$. By composing these maps, we obtain a sequence
\begin{equation}
    0\stackrel{0}{\to} C_2\stackrel{\partial_2}{\to} C_1\stackrel{\partial_1}{\to} C_0\stackrel{0}{\to}0.
\end{equation}

\begin{proposition}\label{prop:boundary-comp}
    The composition of boundary maps vanishes: $\partial_1\circ\partial_2=0$.
\end{proposition}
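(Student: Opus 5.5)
Since $\partial_1$ and $\partial_2$ are both defined on generators and extended linearly, their composition $\partial_1\circ\partial_2:C_2\to C_0$ is also a homomorphism of abelian groups. It therefore suffices to check that it vanishes on the generators of $C_2$, namely the two 2-cells $f_{124}$ and $f_{234}$. Once both generators are shown to map to zero, linearity gives
\begin{equation}
\partial_1\partial_2\Big(\sum_k c_k f_k\Big)=\sum_k c_k\,\partial_1\partial_2(f_k)=0
\end{equation}
for every 2-chain.

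For each face, I first write its boundary 1-chain as a signed sum of the oriented edges, following the stated convention. For $f_{234}$ this is the given example $\partial_2(f_{234})=e_{23}+e_{34}-e_{24}$. For $f_{124}$ the edges $e_{41},e_{12},e_{24}$ are traversed head-to-tail, so every sign is $+1$: $\partial_2(f_{124})=e_{12}+e_{24}+e_{41}$. Next I apply $\partial_1(e_{ij})=v_j-v_i$ term by term. For $f_{124}$ the result telescopes:
\begin{equation}
(v_2-v_1)+(v_4-v_2)+(v_1-v_4)=0.
\end{equation}
For $f_{234}$ it also telescopes:
\begin{equation}
(v_3-v_2)+(v_4-v_3)-(v_4-v_2)=0.
\end{equation}

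The conceptual point, which I would state as a remark, is why this works in general. The boundary of a 2-cell is a closed loop of edges, each taken with the sign that makes the orientations consecutive. Each vertex on the loop then appears once as a head ($+$) and once as a tail ($-$), so everything cancels.

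The only real obstacle is bookkeeping of the signs. The clockwise-traversal convention has to agree with the sign choices $e_{41},e_{12},e_{24}$ and $e_{23},e_{34},-e_{24}$ used in the paper. If the orientations were instead chosen counterclockwise, the boundary would simply acquire an overall sign, and the composition would still vanish. So the result does not depend on this choice, and I would note that as well.
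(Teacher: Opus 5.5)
Your proof is correct and uses the same telescoping cancellation around a closed boundary loop as the paper. The paper argues directly for a generic 2-cell with boundary $e_1+\cdots+e_n$, where the head of $e_i$ is the tail of $e_{i+1}$ cyclically and reversed edges such as $-e_{24}$ are absorbed into the orientation; this is exactly your closing remark. You instead verify it explicitly on the two generators $f_{124}$ and $f_{234}$ and spell out the extension by linearity, which suffices for the complex at hand, whereas the paper's generic form carries over unchanged to other cell complexes such as the torus used afterward.
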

\begin{proof}
    Let $f$ be the 2-cell defined by the sequence of 1-cells $e_1,\ldots,e_n$, and let 1-cell $e_i$ be defined by the pair of 0-cells $v_{i,1},v_{i,2}$. Then $\partial_2(f)=e_1+\cdots+e_n$ and so $\partial_1(\partial_2(f))=\sum_{i=1}^n(v_{i,2}-v_{i,1})$. But $v_{i,2}=v_{i+1,1}$ for $i=1,\ldots,n-1$ and $v_{n,2}=v_{1,1}$. Thus, the sum telescopes and we have $\partial_1\partial_2(f)=0$.
\end{proof}

From Proposition~\ref{prop:boundary-comp}, it follows that the image of $\partial_2$ is a normal subgroup of the kernel of $\partial_1$. The quotient group $H_1(X)=\ker(\partial_1)/\imag(\partial_2)$ is therefore well-defined, and we call it the first homology group. Although the proposition is written for the first and second boundary maps, it applies to any consecutive boundary maps. Thus, a homology group of any level can be defined. For the cell complex (without the torus 1-cell identifications), we have $\partial_2(f_{124})=e_{12}+e_{24}+e_{41}$ and $\partial_2(f_{234})=e_{23}+e_{34}-e_{24}$, and these 1-chains are independent. Now computing the kernel of $\partial_1$ amounts to solving
\begin{equation}
    \partial_1(c_{12}e_{12}+c_{23}e_{23}+c_{34}e_{34}+c_{41}e_{41}+c_{24}e_{24})=0,
\end{equation}
which is equivalent to
\begin{align}
    &(c_{41}-c_{12})v_1+(c_{12}-c_{23}-c_{24})v_2\\
    &+(c_{23}-c_{34})v_3+(c_{24}+c_{34}-c_{41})v_4=0.
\end{align}
Solving, we find that the kernel of $\partial_1$ is generated by the 1-chains $e_{12}+e_{23}+e_{34}+e_{41}$ and $e_{24}-e_{23}-e_{34}$. Thus, in the quotient $H_1(X)=\ker(\partial_1)/\imag(\partial_2)$, we make the identifications $e_{24}=e_{23}+e_{34}$ and $e_{41}=-e_{12}-e_{23}-e_{34}$, which render the generators of the image trivial. It then follows that $H_1(X)=\{0\}$ is the trivial group.

We now compute the zeroth homology group for the cell complex. Recalling that $\partial_0=0$, it follows that $\ker(\partial_0)=C_0$, which is generated by all four 0-cells. On the other hand, we have $\partial_1(e_{12})=v_2-v_1$, $\partial_1(e_{23})=v_3-v_2$, $\partial_1(e_{34})=v_4-v_3$, $\partial_1(e_{41})=v_1-v_4$, and $\partial_1(e_{24})=v_4-v_2$. Since $v_1-v_4$ is the sum of three other 1-cell images, it is not independent, and we can discard it. Similarly, we can discard $v_4-v_2$, and this leaves the three independent generators $v_2-v_1$, $v_3-v_2$, and $v_4-v_3$. Taking the quotient amounts to making the identification $v_1=v_2=v_3=v_4$, which leaves a single generator. Thus, the zeroth homology group of the cell complex $X$ is $H_0(X)=\mathbb{Z}$.

Suppose now that we make the torus 1-cell identifications $e_{12}\equiv -e_{34}$ and $e_{41}\equiv -e_{23}$. Then all the 0-cells are identified as a result: $v_1\equiv v_2\equiv v_3\equiv v_4$. The image of $\partial_2$ is generated by $e_{12}+e_{24}+e_{41}$. Meanwhile, all 0-cells have been identified, and so the boundary of any 1-cell is zero. Thus, the kernel of $\partial_1$ is generated by $e_{12}, e_{24}, e_{41}$. Now taking the quotient produces $H_1(\mathbb{T}^2)=\mathbb{Z}\times\mathbb{Z}$. Moreover, since the boundary of each 1-cell vanishes, the image of $\partial_1$ is zero, and the kernel of the zero map is $C_0$, which is generated by the singular 0-cell $v_1$. Thus, $H_0(\mathbb{T}^2)\cong\mathbb{Z}$. Comparing the homology groups for $X$ and $\mathbb{T}^2$ shows that these topological spaces are not homeomorphic.

\bibliography{verified_refs.bib}

@article{Preskill2018,
  author = {John Preskill},
  title = {Quantum Computing in the {NISQ} era and beyond},
  journal = {Quantum},
  volume = {2},
  pages = {79},
  year = {2018},
  publisher = {Verein zur Forderung des Open Access Publizierens in den Quantenwissenschaften},
  month = {aug},
  doi = {10.22331/q-2018-08-06-79},
}

@article{Bharti2022_nisqreview,
  author = {Bharti, Kishor and Cervera-Lierta, Alba and Kyaw, Thi Ha and Haug, Tobias and Alperin-Lea, Sumner and Anand, Abhinav and Degroote, Matthias and Heimonen, Hermanni and Kottmann, Jakob S. and Menke, Tim and others},
  title = {Noisy intermediate-scale quantum algorithms},
  journal = {Reviews of Modern Physics},
  volume = {94},
  pages = {015004},
  year = {2022},
  publisher = {American Physical Society (APS)},
  month = {Feb},
  doi = {10.1103/RevModPhys.94.015004},
  number = {1},
}

@misc{wang2026explainablequantumregressionalgorithm,
      title={Explainable quantum regression algorithm with encoded data structure}, 
      author={C. -C. Joseph Wang and F. Perkkola and I. Salmenperä and A. Meijer-van de Griend and J. K. Nurminen},
      year={2026},
      eprint={2604.15666},
      archivePrefix={arXiv},
      primaryClass={quant-ph},
      url={https://arxiv.org/abs/2604.15666}, 
}

@misc{rodriguezgrasa2026pacbayesianapproachgeneralizationquantum,
      title={A PAC-Bayesian approach to generalization for quantum models}, 
      author={Pablo Rodriguez-Grasa and Matthias C. Caro and Jens Eisert and Elies Gil-Fuster and Franz J. Schreiber and Carlos Bravo-Prieto},
      year={2026},
      eprint={2603.22964},
      archivePrefix={arXiv},
      primaryClass={quant-ph},
      url={https://arxiv.org/abs/2603.22964}, 
}

@article{Cerezo2021_vqareview,
  author = {Cerezo, M. and Arrasmith, Andrew and Babbush, Ryan and Benjamin, Simon C. and Endo, Suguru and Fujii, Keisuke and McClean, Jarrod R. and Mitarai, Kosuke and Yuan, Xiao and Cincio, Lukasz and others},
  title = {Variational quantum algorithms},
  journal = {Nature Reviews Physics},
  volume = {3},
  number = {9},
  pages = {625--644},
  year = {2020},
  doi = {10.1038/s42254-021-00348-9},
}

@article{Moll2017,
  author = {Nikolaj Moll and Panagiotis Barkoutsos and Lev S Bishop and Jerry M Chow and Andrew Cross and Daniel J Egger and Stefan Filipp and Andreas Fuhrer and Jay M Gambetta and Marc Ganzhorn and others},
  title = {Quantum optimization using variational algorithms on near-term quantum devices},
  journal = {Quantum Science and Technology},
  volume = {3},
  number = {3},
  pages = {030503},
  year = {2017},
  doi = {10.1088/2058-9565/aab822},
}

@article{Wecker_PRA2015,
  author = {Wecker, Dave and Hastings, Matthew B. and Troyer, Matthias},
  title = {Progress towards practical quantum variational algorithms},
  journal = {Physical Review A},
  volume = {92},
  pages = {042303},
  year = {2015},
  publisher = {American Physical Society (APS)},
  month = {Oct},
  doi = {10.1103/PhysRevA.92.042303},
  number = {4},
}

@article{lubasch2020variational,
  author = {Lubasch, Michael and Joo, Jaewoo and Moinier, Pierre and Kiffner, Martin and Jaksch, Dieter},
  title = {Variational quantum algorithms for nonlinear problems},
  journal = {Physical Review A},
  volume = {101},
  number = {1},
  pages = {010301},
  year = {2019},
  publisher = {APS},
  doi = {10.1103/PhysRevA.101.010301},
}

@article{Farhi2018,
  author = {Edward Farhi and Hartmut Neven},
  title = {Classification with Quantum Neural Networks on Near Term Processors},
  journal = {arXiv:1802.06002},
  year = {2018},
  url = {https://arxiv.org/abs/1802.06002},
  doi = {10.37686/qrl.v1i2.80},
}

@article{cong2019quantumconvolutional,
  author = {Cong, Iris and Choi, Soonwon and Lukin, Mikhail D.},
  title = {Quantum Convolutional Neural Networks},
  journal = {Nature Physics},
  volume = {15},
  pages = {1273--1278},
  year = {2018},
  doi = {10.1038/s41567-019-0648-8},
}

@inproceedings{bausch2020recurrent,
  author = {Bausch, Johannes},
  title = {Recurrent Quantum Neural Networks},
  booktitle = {Neural Information Processing Systems},
  volume = {33},
  year = {2020},
  publisher = {Curran Associates, Inc.},
  url = {https://proceedings.neurips.cc/paper/2020/hash/0ec96be397dd6d3cf2fecb4a2d627c1c-Abstract.html},
  journal = {Neural Information Processing Systems},
}

@article{Benedetti2021,
  author = {Benedetti, Marcello and Coyle, Brian and Fiorentini, Mattia and Lubasch, Michael and Rosenkranz, Matthias},
  title = {Variational Inference with a Quantum Computer},
  journal = {Physical Review Applied},
  volume = {16},
  pages = {044057},
  year = {2021},
  publisher = {American Physical Society (APS)},
  month = {Oct},
  doi = {10.1103/PhysRevApplied.16.044057},
  number = {4},
}

@article{NonlinearQCBMs,
  author = {Gili, Kaitlin and Sveistrys, Mykolas and Ballance, Chris},
  title = {Introducing nonlinear activations into quantum generative models},
  journal = {Physical Review A},
  volume = {107},
  number = {1},
  pages = {012406},
  year = {2022},
  publisher = {American Physical Society},
  doi = {10.1103/PhysRevA.107.012406},
}

@inproceedings{chen2022quantumlong,
  author = {Chen, Samuel Yen-Chi and Yoo, Shinjae and Fang, Yao-Lung L.},
  title = {Quantum Long Short-Term Memory},
  booktitle = {IEEE International Conference on Acoustics, Speech, and Signal Processing},
  pages = {8622--8626},
  year = {2020},
  organization = {IEEE},
  doi = {10.1109/ICASSP43922.2022.9747369},
  journal = {IEEE International Conference on Acoustics, Speech, and Signal Processing},
}

@article{cherrat2024quantumvision,
  author = {Cherrat, El Amine and Kerenidis, Iordanis and Mathur, Natansh and Landman, Jonas and Strahm, Martin and Li, Yun Yvonna},
  title = {Quantum {V}ision {T}ransformers},
  journal = {Quantum},
  volume = {8},
  pages = {1265},
  year = {2022},
  doi = {10.22331/q-2024-02-22-1265},
}

@article{abbas2021power,
  author = {Abbas, Amira and Sutter, David and Zoufal, Christa and Lucchi, Aur{\'e}lien and Figalli, Alessio and Woerner, Stefan},
  title = {The power of quantum neural networks},
  journal = {Nature Computational Science},
  volume = {1},
  number = {6},
  pages = {403--409},
  year = {2020},
  publisher = {Nature Publishing Group},
  doi = {10.1038/s43588-021-00084-1},
}

@article{Gili2022,
  author = {Gili, Kaitlin and Hibat-Allah, Mohamed and Mauri, Marta and Ballance, Chris and Perdomo-Ortiz, Alejandro},
  title = {Do Quantum Circuit Born Machines Generalize?},
  journal = {Quantum Science and Technology},
  volume = {8},
  number = {3},
  pages = {035021},
  year = {2022},
  doi = {10.1088/2058-9565/acd578},
}

@article{hur2022quantumconvolutional,
  author = {Hur, Tak and Kim, Leeseok and Park, Daniel K.},
  title = {Quantum Convolutional Neural Network for Classical Data Classification},
  journal = {Quantum Machine Intelligence},
  volume = {4},
  number = {1},
  pages = {3},
  year = {2021},
  doi = {10.1007/s42484-021-00061-x},
}

@article{khoo2024benchmarking,
  author = {Khoo, Jun Yong and Gan, Chee Kwan and Ding, Wenjun and Carrazza, Stefano and Ye, Jun and Kong, Jian Feng},
  title = {Benchmarking Quantum Convolutional Neural Networks for Classification and Data Compression Tasks},
  journal = {arXiv preprint arXiv:2411.13468},
  year = {2024},
  url = {https://arxiv.org/abs/2411.13468},
}

@article{bowles2024better,
  author = {Bowles, Joseph and Ahmed, Shahnawaz and Schuld, Maria},
  title = {Better than classical? The subtle art of benchmarking quantum machine learning models},
  journal = {arXiv.org},
  year = {2024},
  url = {https://arxiv.org/abs/2403.07059},
  doi = {10.48550/arXiv.2403.07059},
}

@article{basilewitsch2025quantum,
  author = {Basilewitsch, Daniel and Bravo, Jo{\~a}o F. and Tutschku, Christian and Struckmeier, Frederick},
  title = {Quantum Neural Networks in Practice: A Comparative Study with Classical Models from Standard Data Sets to Industrial Images},
  journal = {Quantum Machine Intelligence},
  volume = {7},
  pages = {110},
  year = {2024},
  doi = {10.1007/s42484-025-00336-7},
}

@article{qmetric2025benchmarking,
  author = {Illésová, Silvie and Rybotycki, Tomasz and Beseda, Martin},
  title = {QMetric: Benchmarking Quantum Neural Networks Across Circuits, Features, and Training Dimensions},
  journal = {QualITA},
  year = {2025},
  url = {https://arxiv.org/abs/2506.23765},
}

@article{schuld2019evaluating,
  author = {Schuld, Maria and Bergholm, Ville and Gogolin, Christian and Izaac, Josh and Killoran, Nathan},
  title = {Evaluating analytic gradients on quantum hardware},
  journal = {Physical Review A},
  volume = {99},
  number = {3},
  pages = {032331},
  year = {2018},
  publisher = {APS},
  doi = {10.1103/PhysRevA.99.032331},
}

@article{Mcclear2018Barren,
  author = {Mcclean, Jarrod and Boixo, Sergio and Smelyanskiy, Vadim and Babbush, Ryan and Neven, Hartmut},
  title = {Barren plateaus in quantum neural network training landscapes},
  journal = {Nature Communications},
  volume = {9},
  year = {2018},
  month = {11},
  doi = {10.1038/s41467-018-07090-4},
  number = {1},
  publisher = {Springer Science and Business Media LLC},
}

@article{holmes2021scramblers,
  author = {Holmes, Zo{\"e} and Arrasmith, Andrew and Yan, Bin and Coles, Patrick J and Albrecht, Andreas and Sornborger, Andrew T},
  title = {Barren plateaus preclude learning scramblers},
  journal = {Physical Review Letters},
  volume = {126},
  number = {19},
  pages = {190501},
  year = {2020},
  publisher = {APS},
  doi = {10.1103/PhysRevLett.126.190501},
}

@article{wang2021noiseinduced,
  author = {Wang, Samson and Fontana, Enrico and Cerezo, Marco and Sharma, Kunal and Sone, Akira and Cincio, Lukasz and Coles, Patrick J},
  title = {Noise-induced barren plateaus in variational quantum algorithms},
  journal = {Nature communications},
  volume = {12},
  number = {1},
  pages = {1--11},
  year = {2020},
  publisher = {Nature Publishing Group},
  doi = {10.1038/s41467-021-27045-6},
}

@article{holmes2022expressivity,
  author = {Holmes, Zo{\"e} and Sharma, Kunal and Cerezo, Marco and Coles, Patrick J},
  title = {Connecting ansatz expressibility to gradient magnitudes and barren plateaus},
  journal = {PRX Quantum},
  volume = {3},
  number = {1},
  pages = {010313},
  year = {2021},
  publisher = {APS},
  doi = {10.1103/PRXQuantum.3.010313},
}

@article{cerezo2021costfunction,
  author = {Cerezo, Marco and Sone, Akira and Volkoff, Tyler and Cincio, Lukasz and Coles, Patrick J},
  title = {Cost function dependent barren plateaus in shallow parametrized quantum circuits},
  journal = {Nature communications},
  volume = {12},
  number = {1},
  pages = {1--12},
  year = {2021},
  publisher = {Springer Science and Business Media LLC},
  doi = {10.1038/s41467-021-21728-w},
}

@article{arrasmith2021gorges,
  author = {Arrasmith, Andrew and Holmes, Zo{\"e} and Cerezo, Marco and Coles, Patrick J.},
  title = {Equivalence of Quantum Barren Plateaus to Cost Concentration and Narrow Gorges},
  journal = {Quantum Science and Technology},
  volume = {7},
  number = {4},
  pages = {045015},
  year = {2021},
  doi = {10.1088/2058-9565/ac7d06},
}

@article{larocca2025review,
  author = {Larocca, Martín and Thanasilp, Supanut and Wang, Samson and Sharma, Kunal and Biamonte, Jacob and Coles, Patrick J. and Cincio, Lukasz and McClean, Jarrod R. and Holmes, Zoë and Cerezo, M.},
  title = {A review of barren plateaus in variational quantum computing},
  journal = {Nature Reviews Physics},
  volume = {7},
  pages = {174--189},
  year = {2025},
  doi = {10.1038/s42254-025-00813-9},
  number = {4},
  publisher = {Springer Science and Business Media LLC},
}

@article{shi2025avoiding,
  author = {Shi, Xiao and Shang, Yun},
  title = {Avoiding barren plateaus via Gaussian mixture model},
  journal = {New Journal of Physics},
  volume = {27},
  number = {10},
  pages = {104501},
  year = {2024},
  publisher = {IOP Publishing},
  doi = {10.1088/1367-2630/ae0823},
}

@book{Bengio-Book,
  author = {Hao, Xingbang and Zhang, Guigang and Ma, Shang},
  title = {Deep Learning},
  year = {2016},
  publisher = {MIT Press},
  url = {http://www.deeplearningbook.org},
  journal = {International Journal of Semantic Computing},
  doi = {10.1142/S1793351X16500045},
}

@article{bengio2021deep,
  author = {Bengio, Yoshua and LeCun, Yann and Hinton, Geoffrey},
  title = {Deep learning for AI},
  journal = {Communications of the ACM},
  volume = {64},
  number = {7},
  pages = {58--65},
  year = {2021},
  doi = {10.1145/3448250},
  publisher = {Association for Computing Machinery (ACM)},
}

@article{paleyes2022challenges,
  author = {Paleyes, Andrei and Urma, Raoul-Gabriel and Lawrence, Neil D.},
  title = {Challenges in Deploying Machine Learning: A Survey of Case Studies},
  journal = {ACM Computing Surveys},
  volume = {55},
  number = {6},
  pages = {1--29},
  year = {2020},
  publisher = {ACM},
  doi = {10.1145/3533378},
}

@article{hendrickx2024machine,
  author = {Hendrickx, Kilian and Perini, Lorenzo and Plas, Dries Van der and Meert, Wannes and Davis, Jesse},
  title = {Machine learning with a reject option: a survey},
  journal = {Machine-mediated learning},
  year = {2021},
  doi = {10.1007/s10994-024-06534-x},
}

@article{hasan2023survey,
  author = {Hasan, M. and Abdar, M. and Khosravi, Abbas and Aickelin, U. and Lio’, Pietro and Hossain, Ibrahim and Rahman, Ashikur and Nahavandi, Saeid},
  title = {Survey on Leveraging Uncertainty Estimation Towards Trustworthy Deep Neural Networks: The Case of Reject Option and Post-training Processing},
  journal = {ACM Computing Surveys},
  volume = {57},
  number = {9},
  pages = {236},
  year = {2023},
  doi = {10.1145/3727633},
}

@article{norori2021addressing,
  author = {Norori, Natalia and Hu, Qiyang and Aellen, Florence M. and Faraci, Francesca D. and Tzovara, Athina},
  title = {Addressing bias in big data and AI for health care: A call for open science},
  journal = {Patterns},
  volume = {2},
  number = {10},
  pages = {100347},
  year = {2021},
  doi = {10.1016/j.patter.2021.100347},
  publisher = {Elsevier BV},
}

@article{turner2022race,
  author = {Turner, Brian E. and Steinberg, Joshua R. and Weeks, Brian T. and Rodriguez, Fernando and Cullen, Mark R.},
  title = {Race/ethnicity reporting and representation in US clinical trials: A cohort study},
  journal = {The Lancet Regional Health -- Americas},
  volume = {11},
  pages = {100252},
  year = {2022},
  doi = {10.1016/j.lana.2022.100252},
  publisher = {Elsevier BV},
}

@article{bereska2024mechanistic,
  author = {Bereska, Leonard and Gavves, Efstratios},
  title = {Mechanistic Interpretability for {AI} Safety -- A Review},
  journal = {Trans. Mach. Learn. Res.},
  year = {2024},
  url = {https://openreview.net/forum?id=ePUVetPKu6},
  doi = {10.48550/arXiv.2404.14082},
}

@article{somvanshi2026bridging,
  author = {Somvanshi, Shriyank and Islam, Md Monzurul and Rafe, Amir and Tusti, Anannya Ghosh and Chakraborty, Arka and Baitullah, Anika and Chowdhury, Tausif Islam and Alnawmasi, Nawaf and Dutta, Anandi and Das, Subasish},
  title = {Bridging the Black Box: A Survey on Mechanistic Interpretability in AI},
  journal = {ACM Computing Surveys},
  volume = {58},
  number = {8},
  year = {2026},
  doi = {10.1145/3787104},
  pages = {1-35},
  publisher = {Association for Computing Machinery (ACM)},
}

@article{Zschech2025Inherently,
  author = {Zschech, Patrick and Weinzierl, Sven and Kraus, Mathias},
  title = {Inherently Interpretable Machine Learning: A Contrasting Paradigm to Post-hoc Explainable AI},
  journal = {Business \& Information Systems Engineering},
  pages = {1--19},
  year = {2025},
  note = {Received 17 Dec 2024; accepted 24 Jul 2025; published 15 Sep 2025},
  doi = {10.1007/s12599-025-00964-0},
}

@article{Rudin2019StopExplaining,
  author = {Rudin, Cynthia},
  title = {Stop explaining black box machine learning models for high stakes decisions and use interpretable models instead},
  journal = {Nature Machine Intelligence},
  volume = {1},
  pages = {206--215},
  year = {2018},
  month = {may},
  note = {Received 30 Dec 2018; accepted 26 Mar 2019; published 13 May 2019},
  doi = {10.1038/s42256-019-0048-x},
}

@article{schuld2021effect,
  author = {Schuld, Maria and Sweke, Ryan and Meyer, Johannes Jakob},
  title = {The Effect of Data Encoding on the Expressive Power of Variational Quantum-Machine-Learning Models},
  journal = {Physical Review A},
  volume = {103},
  number = {3},
  pages = {032430},
  year = {2020},
  doi = {10.1103/PhysRevA.103.032430},
}

@inproceedings{rahimi2007random,
  author = {Rahimi, Ali and Recht, Benjamin},
  title = {Random Features for Large-Scale Kernel Machines},
  booktitle = {Neural Information Processing Systems},
  volume = {20},
  year = {2007},
  publisher = {Curran Associates, Inc.},
  url = {https://papers.nips.cc/paper/3182-random-features-for-large-scale-kernel-machines},
  journal = {Neural Information Processing Systems},
}

@book{rasmussen2006Gaussian,
  author = {Canelles, Gerard Martínez},
  title = {Gaussian Processes for Machine Learning},
  year = {2017},
  publisher = {MIT Press},
  address = {Cambridge, MA},
  url = {https://gaussianprocess.org/gpml/},
}

@article{li2025Gaussian,
  author = {Li, Jinglai and Wang, Hongqiao},
  title = {Gaussian Processes Regression for Uncertainty Quantification: An Introductory Tutorial},
  journal = {arXiv preprint arXiv:2502.03090},
  year = {2025},
  doi = {10.48550/arXiv.2502.03090},
}

@article{parzygnat2025toward,
  author = {Parzygnat, Arthur J. and Bradley, Tai-Danae and Vlasic, Andrew and Pham, Anh},
  title = {Toward structure-preserving quantum encodings},
  journal = {Physical Review Research},
  volume = {7},
  pages = {041001},
  year = {2025},
  publisher = {American Physical Society (APS)},
  month = {Dec},
  doi = {10.1103/rph8-g15q},
  number = {4},
}

@article{tang2022dequantizing,
  author = {Tang, Ewin},
  title = {Dequantizing algorithms to understand quantum advantage in machine learning},
  journal = {Nature Reviews Physics},
  volume = {4},
  number = {11},
  pages = {692--693},
  year = {2022},
  publisher = {Springer Science and Business Media LLC},
  doi = {10.1038/s42254-022-00511-w},
}

@misc{belis2026spectral,
  author = {Belis, Vasilis and Bowles, Joseph and Gupta, Rishabh and Peters, Evan and Schuld, Maria},
  title = {Spectral Methods: Crucial for Machine Learning, Natural for Quantum Computers?},
  year = {2026},
  url = {https://arxiv.org/abs/2603.24654},
  journal = {arXiv.org},
  doi = {10.48550/arXiv.2603.24654},
}

@article{sam2024bayesian,
  author = {Sam, Dylan and Pukdee, Rattana and Jeong, Daniel P. and Byun, Yewon and Kolter, J. Zico},
  title = {Bayesian Neural Networks with Domain Knowledge Priors},
  journal = {arXiv.org},
  year = {2024},
  url = {https://arxiv.org/abs/2402.13410},
  doi = {10.48550/arXiv.2402.13410},
}

@misc{harvey2026occams,
  author = {Harvey, Ethan and Hughes, Michael C.},
  title = {Occam's Razor is Only as Sharp as Your ELBO},
  year = {2026},
  url = {https://arxiv.org/abs/2604.25984},
  journal = {arXiv.org},
  doi = {10.48550/arXiv.2604.25984},
}

@inproceedings{wilson2013Gaussian,
  author = {Wilson, Andrew Gordon and Adams, Ryan Prescott},
  title = {Gaussian Process Kernels for Pattern Discovery and Extrapolation},
  booktitle = {International Conference on Machine Learning},
  series = {Proceedings of Machine Learning Research},
  volume = {28},
  pages = {1067--1075},
  year = {2013},
  publisher = {PMLR},
  url = {https://proceedings.mlr.press/v28/wilson13.html},
  journal = {International Conference on Machine Learning},
}

@inproceedings{tompkins2018fourier,
  author = {Tompkins, Anthony and Ramos, Fabio},
  title = {Fourier Feature Approximations for Periodic Kernels in Time-Series Modelling},
  booktitle = {AAAI Conference on Artificial Intelligence},
  volume = {32},
  number = {1},
  year = {2018},
  doi = {10.1609/aaai.v32i1.11696},
  journal = {AAAI Conference on Artificial Intelligence},
  publisher = {Association for the Advancement of Artificial Intelligence (AAAI)},
}

@book{bishop2006pattern,
  author = {Strauss, Laura},
  title = {Pattern Recognition and Machine Learning},
  series = {Information Science and Statistics},
  year = {2016},
  publisher = {Springer},
  address = {New York, NY},
  doi = {10.1007/978-0-387-45528-0},
}

@article{lazaro-gredilla2010sparse,
  author = {Lázaro-Gredilla, M. and Candela, J. Q. and Rasmussen, C. and Figueiras-Vidal, A.},
  title = {Sparse Spectrum Gaussian Process Regression},
  journal = {Journal of Machine Learning Research},
  volume = {11},
  number = {63},
  pages = {1865--1881},
  year = {2010},
  url = {https://jmlr.org/papers/v11/lazaro-gredilla10a.html},
  doi = {10.5555/1756006.1859914},
}

@inproceedings{gal2015improving,
  author = {Gal, Yarin and Turner, Richard},
  editor = {Bach, Francis and Blei, David},
  title = {Improving the Gaussian Process Sparse Spectrum Approximation by Representing Uncertainty in Frequency Inputs},
  booktitle = {International Conference on Machine Learning},
  series = {Proceedings of Machine Learning Research},
  volume = {37},
  pages = {655--664},
  year = {2015},
  publisher = {PMLR},
  address = {Lille, France},
  url = {https://proceedings.mlr.press/v37/galb15.html},
  journal = {International Conference on Machine Learning},
}

@inproceedings{wilson2016deepkernel,
  author = {Wilson, Andrew Gordon and Hu, Zhiting and Salakhutdinov, Ruslan and Xing, Eric P.},
  editor = {Gretton, Arthur and Robert, Christian C.},
  title = {Deep Kernel Learning},
  booktitle = {International Conference on Artificial Intelligence and Statistics},
  series = {Proceedings of Machine Learning Research},
  volume = {51},
  pages = {370--378},
  year = {2015},
  publisher = {PMLR},
  address = {Cadiz, Spain},
  url = {https://proceedings.mlr.press/v51/wilson16.html},
  journal = {International Conference on Artificial Intelligence and Statistics},
}

@book{mackay2003information,
  author = {MacKay, David J. C.},
  title = {Information Theory, Inference, and Learning Algorithms},
  year = {2004},
  publisher = {Institute of Electrical and Electronics Engineers (IEEE)},
  address = {Cambridge, UK},
  url = {https://www.inference.org.uk/itprnn/book.pdf},
  journal = {IEEE Transactions on Information Theory},
  volume = {50},
  number = {10},
  pages = {2544-2545},
  doi = {10.1109/tit.2004.834752},
}

@article{liu2023simple,
  author = {Liu, Jeremiah Zhe and Padhy, Shreyas and Ren, Jie and Lin, Zi and Wen, Yeming and Jerfel, Ghassen and Nado, Zachary and Snoek, Jasper and Tran, Dustin and Lakshminarayanan, Balaji},
  title = {A Simple Approach to Improve Single-Model Deep Uncertainty via Distance-Awareness},
  journal = {Journal of Machine Learning Research},
  volume = {24},
  number = {42},
  pages = {1--63},
  year = {2022},
  url = {https://jmlr.org/papers/v24/22-0479.html},
  doi = {10.48550/arXiv.2205.00403},
}

@inproceedings{harvey2026learning,
  author = {Harvey, Ethan and Petrov, Mikhail and Hughes, Michael C.},
  title = {Learning Hyperparameters via a Data-Emphasized Variational Objective},
  booktitle = {arXiv.org},
  series = {Proceedings of Machine Learning Research},
  year = {2025},
  url = {https://openreview.net/forum?id=O2blXNfb0b},
  journal = {arXiv.org},
  doi = {10.48550/arXiv.2502.01861},
}

@article{jaderberg2024let,
  author = {Jaderberg, Ben and Gentile, A. A. and Berrada, Youssef and Shishenina, Elvira and Elfving, V.},
  title = {Let Quantum Neural Networks Choose Their Own Frequencies},
  journal = {Physical Review A},
  volume = {109},
  number = {4},
  pages = {042421},
  year = {2023},
  doi = {10.1103/PhysRevA.109.042421},
}

@inproceedings{landman2022classically,
  author = {Landman, Jonas and Thabet, Slimane and Dalyac, Constantin and Mhiri, Hela and Kashefi, Elham},
  title = {Classically Approximating Variational Quantum Machine Learning with Random Fourier Features},
  booktitle = {International Conference on Learning Representations},
  year = {2022},
  url = {https://openreview.net/forum?id=ymFhZxw70uz},
  journal = {International Conference on Learning Representations},
}

@article{shin2023exponential,
  author = {Shin, Seongwook and Teo, Yong-Siah and Jeong, Hyunseok},
  title = {Exponential data encoding for quantum supervised learning},
  journal = {Physical Review A},
  volume = {107},
  pages = {012422},
  year = {2022},
  doi = {10.1103/PhysRevA.107.012422},
}

@article{peters2023generalization,
  author = {Peters, Evan and Schuld, Maria},
  title = {Generalization despite overfitting in quantum machine learning models},
  journal = {Quantum},
  volume = {7},
  pages = {1210},
  year = {2022},
  doi = {10.22331/q-2023-12-20-1210},
}

@article{mhiri2025constrained,
  author = {Mhiri, Hela and Monbroussou, Léo and Herrero-Gonzalez, Mario and Thabet, Slimane and Kashefi, E. and Landman, Jonas},
  title = {Constrained and Vanishing Expressivity of Quantum Fourier Models},
  journal = {Quantum},
  volume = {9},
  pages = {1847},
  year = {2024},
  doi = {10.22331/q-2025-09-03-1847},
}

@article{sweke2025potential,
  author = {Sweke, R. and Recio, Erik and Jerbi, S. and Gil-Fuster, Elies and Fuller, Bryce and Eisert, J. and Meyer, Johannes Jakob},
  title = {Potential and Limitations of Random Fourier Features for Dequantizing Quantum Machine Learning},
  journal = {Quantum},
  volume = {9},
  pages = {1640},
  year = {2023},
  doi = {10.22331/q-2025-02-20-1640},
}

@article{strobl2025fourier,
  author = {Strobl, Melvin and Sahin, M. E. and Horst, L. and Kuehn, Eileen and Streit, Achim and Jaderberg, Ben and Corr, Coeff.-Coeff.},
  title = {Fourier Fingerprints of Ansatzes in Quantum Machine Learning},
  journal = {arXiv preprint arXiv:2508.20868},
  year = {2025},
  doi = {10.48550/arXiv.2508.20868},
}

@misc{sahebi2025dequantization,
  author = {Sahebi, Mehrad and Barthe, Alice and Suzuki, Yudai and Holmes, Zo{\"e} and Grossi, Michele},
  title = {On Dequantization of Supervised Quantum Machine Learning via Random Fourier Features},
  year = {2025},
  url = {https://arxiv.org/abs/2505.15902},
}

@article{tuysuz2026quantum,
  author = {Tuysuz, Cenk and Kyriienko, O. and Grossi, Michele},
  title = {Quantum Fourier Generative Models Trainable at Large Scale},
  journal = {arXiv preprint arXiv:2606.28483},
  year = {2026},
  doi = {10.48550/arXiv.2606.28483},
}

@article{oh2026fourier,
  author = {Oh, Seungcheol and Roh, Emily Jimin and Vasilakos, Athanasios V. and Park, Soohyun and Kim, Joongheon},
  title = {Fourier Analysis Perspective on Quantum Neural Networks},
  journal = {Communications Physics},
  volume = {9},
  number = {1},
  pages = {176},
  year = {2026},
  doi = {10.1038/s42005-026-02680-x},
  publisher = {Springer Science and Business Media LLC},
}

@article{mcardle2020quantum,
  author = {McArdle, Sam and Endo, Suguru and Aspuru-Guzik, Al{\'a}n and Benjamin, Simon C and Yuan, Xiao},
  title = {Quantum computational chemistry},
  journal = {Reviews of Modern Physics},
  volume = {92},
  number = {1},
  pages = {015003},
  year = {2018},
  publisher = {APS},
  doi = {10.1103/RevModPhys.92.015003},
}

@book{shawe2004kernel,
  author = {Shawe-Taylor, John and Cristianini, Nello},
  title = {Kernel Methods for Pattern Analysis},
  year = {2004},
  publisher = {Cambridge University Press},
  address = {Cambridge},
  doi = {10.1017/CBO9780511809682},
}

@article{hofmann2008kernel,
  author = {Hofmann, Thomas and Scholkopf, B. and Smola, Alex},
  title = {Kernel Methods in Machine Learning},
  journal = {The Annals of Statistics},
  volume = {36},
  number = {3},
  pages = {1171--1220},
  year = {2007},
  doi = {10.1214/009053607000000677},
}

@misc{schuld2021supervised,
  author = {Schuld, Maria},
  title = {Supervised Quantum Machine Learning Models Are Kernel Methods},
  year = {2021},
  url = {https://arxiv.org/abs/2101.11020},
  journal = {Quantum Machine Intelligence},
  volume = {1},
  number = {3-4},
  pages = {65-71},
  doi = {10.1007/s42484-019-00007-4},
  publisher = {Springer Science and Business Media LLC},
}

@article{meyer2022exploiting,
  author = {Meyer, Johannes Jakob and Mularski, Marian and Gil-Fuster, Elies and Mele, Antonio Anna and Arzani, Francesco and Wilms, Alissa and Eisert, Jens},
  title = {Exploiting Symmetry in Variational Quantum Machine Learning},
  journal = {PRX Quantum},
  volume = {4},
  number = {1},
  pages = {010328},
  year = {2022},
  doi = {10.1103/PRXQuantum.4.010328},
}

@article{Larocca_2022,
  author = {Mart{\'i}n Larocca and Fr{\'{e}}d{\'{e}}ric Sauvage and Faris M. Sbahi and Guillaume Verdon and Patrick J. Coles and M. Cerezo},
  title = {Group-Invariant Quantum Machine Learning},
  journal = {{PRX} Quantum},
  volume = {3},
  number = {3},
  year = {2022},
  publisher = {American Physical Society ({APS})},
  month = {sep},
  doi = {10.1103/prxquantum.3.030341},
}

@misc{ragone2023representation,
  author = {Michael Ragone and Paolo Braccia and Quynh T. Nguyen and Louis Schatzki and Patrick J. Coles and Frederic Sauvage and Martin Larocca and M. Cerezo},
  title = {Representation Theory for Geometric Quantum Machine Learning},
  year = {2022},
  url = {https://arxiv.org/abs/2210.07980},
  journal = {arXiv.org},
  doi = {10.48550/arXiv.2210.07980},
}

@article{nguyen2024theory,
  author = {Nguyen, Quynh T. and Schatzki, Louis and Braccia, Paolo and Ragone, Michael and Coles, Patrick J. and Sauvage, Fr{\'e}d{\'e}ric and Larocca, Mart{\'i}n and Cerezo, M.},
  title = {Theory for Equivariant Quantum Neural Networks},
  journal = {PRX Quantum},
  volume = {5},
  number = {2},
  pages = {020328},
  year = {2022},
  publisher = {American Physical Society},
  doi = {10.1103/PRXQuantum.5.020328},
}

@article{stone1932one,
  author = {Stone, Marshall H},
  title = {On one-parameter unitary groups in Hilbert space},
  journal = {The Annals of Mathematics},
  volume = {33},
  number = {3},
  pages = {643},
  year = {1932},
  publisher = {JSTOR},
  doi = {10.2307/1968538},
}

@misc{bowles2023contextuality,
  author = {Joseph Bowles and Victoria J Wright and Máté Farkas and Nathan Killoran and Maria Schuld},
  title = {Contextuality and inductive bias in quantum machine learning},
  year = {2023},
  url = {https://arxiv.org/abs/2302.01365},
}

@book{fulton2013representation,
  author = {Fulton, William and Harris, Joe},
  title = {Representation theory: a first course},
  year = {1991},
  publisher = {Springer Science \& Business Media},
  doi = {10.1007/978-1-4612-0979-9},
}

@book{brian2003lie,
  author = {Hall, Brian C},
  title = {Lie groups, Lie algebras, and representations: an elementary introduction},
  year = {2004},
  publisher = {Springer},
  doi = {10.1007/978-0-387-21554-9},
}

@book{folland2016course,
  author = {Folland, Gerald B},
  title = {A course in abstract harmonic analysis},
  year = {1995},
  publisher = {CRC press},
  doi = {10.1201/b19172},
}

@article{bradshaw2025learning,
  author = {Bradshaw, Zachary P. and Evans, Ethan N. and Cook, Matthew and LaBorde, Margarite L.},
  title = {Learning equivariant maps with variational quantum circuits},
  journal = {Phys. Rev. Appl.},
  volume = {23},
  pages = {044007},
  year = {2024},
  publisher = {American Physical Society},
  month = {Apr},
  doi = {10.1103/PhysRevApplied.23.044007},
}

@article{gili2024inductive,
  author = {Gili, Kaitlin and Alonso, Guillermo and Schuld, Maria},
  title = {An inductive bias from quantum mechanics: learning order effects with non-commuting measurements},
  journal = {Quantum Machine Intelligence},
  volume = {6},
  pages = {67},
  year = {2023},
  doi = {10.1007/s42484-024-00200-0},
}

@book{searcoid2007metric,
  author = {Clason, Christian},
  title = {Metric spaces},
  year = {2020},
  publisher = {Springer},
  doi = {10.1017/9781139030267.003},
  journal = {An Introduction to Functional Analysis},
}

@book{rudin2021principles,
  author = {Rudin, Walter},
  title = {Principles of mathematical analysis},
  edition = {3},
  year = {1964},
  publisher = {McGraw-Hill},
  url = {https://books.google.com/books?id=kwqzPAAACAAJ},
  doi = {10.2307/3608793},
}

@book{kaplansky2020set,
  author = {Kaplansky, Irving},
  title = {Set theory and metric spaces},
  volume = {298},
  year = {1972},
  publisher = {American Mathematical Society},
  url = {https://bookstore.ams.org/chel-298},
  doi = {10.2307/2319424},
}

@book{watrous2018theory,
  author = {Watrous, John},
  title = {The theory of quantum information},
  year = {2018},
  publisher = {Cambridge university press},
  doi = {10.1017/9781316848142},
}

@book{rudin1987real,
  author = {Rudin, W.},
  title = {Real and Complex Analysis},
  series = {Mathematics series},
  edition = {3},
  year = {1968},
  publisher = {McGraw-Hill},
  url = {https://books.google.com/books?id=NmW7QgAACAAJ},
  journal = {Mathematical Gazette},
  doi = {10.2307/3611894},
}

@book{hatcher2002algebraic,
  author = {Cartier, Pierre and Patras, F.},
  title = {Algebraic Topology},
  series = {Algebraic Topology},
  year = {2021},
  publisher = {Cambridge University Press},
  url = {https://pi.math.cornell.edu/~hatcher/AT/AT.pdf},
  journal = {Algebra and Applications},
  doi = {10.1007/978-3-030-77845-3_8},
}

@article{edelsbrunner2002topological,
  author = {Edelsbrunner and Letscher and Zomorodian},
  title = {Topological persistence and simplification},
  journal = {Proceedings 41st Annual Symposium on Foundations of Computer Science},
  volume = {28},
  number = {4},
  pages = {511--533},
  year = {2000},
  publisher = {Springer},
  doi = {10.1109/SFCS.2000.892133},
}

@inproceedings{zomorodian2004computing,
  author = {Zomorodian, Afra and Carlsson, Gunnar},
  title = {Computing persistent homology},
  booktitle = {SCG '04},
  pages = {347--356},
  year = {2004},
  doi = {10.1145/997817.997870},
  journal = {SCG '04},
  publisher = {ACM},
}

@article{carlsson2009topology,
  author = {Carlsson, Gunnar},
  title = {Topology and data},
  journal = {Bulletin of the American mathematical society},
  volume = {46},
  number = {2},
  pages = {255--308},
  year = {2009},
  doi = {10.1090/S0273-0979-09-01249-X},
  publisher = {American Mathematical Society (AMS)},
}

@article{ghrist2008barcodes,
  author = {Ghrist, Robert},
  title = {Barcodes: the persistent topology of data},
  journal = {Bulletin of the American Mathematical Society},
  volume = {45},
  number = {1},
  pages = {61--75},
  year = {2007},
  doi = {10.1090/S0273-0979-07-01191-3},
  publisher = {American Mathematical Society (AMS)},
}

@book{edelsbrunner2010computational,
  author = {Edelsbrunner, Herbert and Harer, John},
  title = {Computational topology: an introduction},
  year = {2009},
  publisher = {American Mathematical Soc.},
  doi = {10.1090/mbk/069},
}

@article{parzygnat2026quantum,
  author = {Parzygnat, Arthur J and Vlasic, Andrew},
  title = {Quantum encodings that preserve persistent homology},
  journal = {arXiv.org},
  year = {2026},
  url = {https://arxiv.org/abs/2605.28927},
  doi = {10.48550/arXiv.2605.28927},
}

@article{eckmann1944harmonische,
  author = {Eckmann, Beno},
  title = {Harmonische funktionen und randwertaufgaben in einem komplex},
  journal = {Commentarii Mathematici Helvetici},
  volume = {17},
  number = {1},
  pages = {240--255},
  year = {1944},
  publisher = {Springer},
  doi = {10.1007/BF02566245},
}

@article{horak2013spectra,
  author = {Horak, Danijela and Jost, J{\"u}rgen},
  title = {Spectra of combinatorial Laplace operators on simplicial complexes},
  journal = {Advances in Mathematics},
  volume = {244},
  pages = {303--336},
  year = {2011},
  publisher = {Elsevier},
  doi = {10.1016/j.aim.2013.05.007},
}

@article{lim2020hodge,
  author = {Lim, Lek-Heng},
  title = {Hodge Laplacians on graphs},
  journal = {Siam Review},
  volume = {62},
  number = {3},
  pages = {685--715},
  year = {2015},
  publisher = {SIAM},
  doi = {10.1137/18M1223101},
}

@article{lloyd2016quantum,
  author = {Lloyd, Seth and Garnerone, Silvano and Zanardi, Paolo},
  title = {Quantum algorithms for topological and geometric analysis of data},
  journal = {Nature communications},
  volume = {7},
  number = {1},
  pages = {10138},
  year = {2016},
  publisher = {Springer Science and Business Media LLC},
  doi = {10.1038/ncomms10138},
}

@article{ubaru2021quantum,
  author = {Ubaru, Shashanka and Akhalwaya, Ismail Yunus and Squillante, Mark S and Clarkson, Kenneth L and Horesh, Lior},
  title = {Quantum topological data analysis with linear depth and exponential speedup},
  journal = {arXiv.org},
  year = {2021},
  url = {https://arxiv.org/abs/2108.02811},
}

@article{mcardle2026streamlined,
  author = {McArdle, Sam and Gily'en, Andr'as and Berta, M.},
  title = {A streamlined quantum algorithm for topological data analysis with exponentially fewer qubits},
  journal = {Quantum},
  volume = {10},
  pages = {2058},
  year = {2022},
  publisher = {Verein zur F{\"o}rderung des Open Access Publizierens in den Quantenwissenschaften},
  doi = {10.22331/q-2026-04-10-2058},
}

@article{berry2024analyzing,
  author = {Berry, Dominic W and Su, Yuan and Gyurik, Casper and King, Robbie and Basso, Joao and Barba, Alexander Del Toro and Rajput, Abhishek and Wiebe, Nathan and Dunjko, Vedran and Babbush, Ryan},
  title = {Analyzing prospects for quantum advantage in topological data analysis},
  journal = {PRX Quantum},
  volume = {5},
  number = {1},
  pages = {010319},
  year = {2022},
  publisher = {APS},
  doi = {10.1103/PRXQuantum.5.010319},
}

@article{gyurik2022towards,
  author = {Gyurik, Casper and Cade, Chris and Dunjko, Vedran},
  title = {Towards quantum advantage via topological data analysis},
  journal = {Quantum},
  volume = {6},
  pages = {855},
  year = {2020},
  publisher = {Verein zur F{\"o}rderung des Open Access Publizierens in den Quantenwissenschaften},
  doi = {10.22331/q-2022-11-10-855},
}

@article{schmidhuber2023complexity,
  author = {Schmidhuber, Alexander and Lloyd, Seth},
  title = {Complexity-theoretic limitations on quantum algorithms for topological data analysis},
  journal = {PRX Quantum},
  volume = {4},
  number = {4},
  pages = {040349},
  year = {2022},
  publisher = {APS},
  doi = {10.1103/PRXQuantum.4.040349},
}

\end{document}